\theoremstyle{plain} \numberwithin{equation}{section}
\newtheorem{theorem}{Theorem}[section]
\newtheorem{lemma}[theorem]{Lemma}
\newtheorem{proposition}[theorem]{Proposition}
\theoremstyle{definition}
\newtheorem{definition}[theorem]{Definition}
\newtheorem{remark}[theorem]{Remark}
\newenvironment{example}
  {\pushQED{\qed}\examplex}
  {\popQED\endexamplex}
\title{Poly-Poisson Sigma models and their relational poly-symplectic groupoids}
\author{Ivan Contreras and Nicolas Martinez Alba}
\address{Department of Mathematics, University of Illinois at Urbana-Champaign 64801, USA}
\address{Departamento de Matem\'aticas, Universidad Nacional de Colombia, Bogot\'a, Colombia}
\thanks{}
\email{icontrer@illinois.edu, nmartineza@unal.edu.co}
\newcommand{\TM}          {T^*M}
\newcommand{\TQ}          {T^*Q}
\newcommand{\Ann}         {\mathrm{Ann}}
\newcommand{\Ad}          {{\mathrm {Ad}}}
\newcommand{\ad}          {{\mathrm {ad}}}
\newcommand{\pr}          {\mathrm{pr}}
\newcommand{\gpd}         {\mathcal{G}}
\newcommand{\F}           {\mathcal{F}}
\newcommand{\pth}         {\mathcal{P}}
\newcommand{\G}           {\mathbb{G}}
\newcommand{\LG}          {\mathfrak{g}}
\newcommand{\w}           {\omega}
\newcommand{\opr}         {\oplus_{(r)}}
\newcommand{\image}       {{\mathrm{Im}}}
\newcommand{\Ker}         {{\mathrm {ker}}}
\newcommand{\Lie}         {\mathcal L}
\newcommand{\RR}          {\mathbb{R}}
\begin{document}

\begin{abstract}
The main idea of this note is to describe the integration procedure for poly-Poisson structures, that is, to find a poly-symplectic groupoid integrating a poly-Poisson structure, in terms of topological field theories, namely via the path-space construction. This will be given in terms of the poly-Poisson sigma model $(PPSM)$ and we prove that every poly-Poisson structure has a natural integration via  relational poly-symplectic groupoids,  extending the results in \cite{CC} and \cite{Martz}. We provide familiar examples (trivial, linear, constant and symplectic) within this formulation and we give some applications of this construction regarding the classification of poly-symplectic integrations, as well as Morita equivalence of poly-Poisson manifolds.
\end{abstract}

\maketitle

\tableofcontents
\section{Introduction}

In the study of classical field theories, it is usually considered the variational principle of the first order Lagrangian formalism over a fiber bundle, and defined via a density (or Lagrangian) function on the first order jet manifold. The variational principle for this Lagrangian leads to the Hamilton-de Donder-Weyl equations by introducing new coordinates, denoted poly-momenta. Original developments in this direction can be traced back to the work of Carath\'eodory [4], and independentely by H. Weyl [37] and T. de Donder [8]. The Hamiltonian counterpart to the first order Lagrangian field theory on a fibre bundle is the covariant Hamiltonian formalism, where canonical momenta, or poly-momentum coordinates, correspond to jets of the field variables on a base manifold. This formalism was developed by the Polish school, based originally on the seminar by W. Tulczyjew [36] in 1968.

In this formalism there is a family of closed 2-forms defined by each poly-momentum coordinates. 
Furthermore, the unique vector in the common kernel of these pre-symplectic forms is the trivial one.  The geometry underlying this formalism is known as poly-symplectic geometry \cite{Aw,Gu}, that is a smooth manifold equipped with a non degenerated vector valued closed 2-form.

Once we get the geometry associated to the Hamiltonian formalism of this field theory, a natural task is to extend the notion of Poisson manifold in a suitable way that includes the poly-symplectic geometry and also the equations of motion associated to the system. Among the different ways to extend the Poisson structure fitting in the classical field theory (see for example \cite{CM,FPR,Ka}) there are two definitions which induce Lie algebroid structures. In \cite{IMV}, they give a definition that extends the properties of the inverse bundle map of the one defined by the poly-symplectic form via contraction with tangent vectors. A stronger version was proposed in \cite{Martz}, and it was motivated by the use of poly-symplectic groupoids, i.e.  a Lie groupoid with a poly-symplectic structure, so that each pre-symplectic form is multiplicative, and extending the relation between symplectic groupoids and its infinitesimal counterpart, identified with Poisson structures. Moreover, the poly-Poisson structures defined in this way appear naturally in the context of Hamiltonian formalism of classical field theory as in \cite{NMA}.

In this paper we will use the second approach, that is the structure arising as the infinitesimal part of poly-symplectic groupoids. This can be identified with a sub-bundle of $r$ copies of the cotangent bundle of the base manifold, a skew-symmetric bundle map coming from the anchor, and a suitable extension of the Koszul bracket of 1-forms. This sub-bundle and the bundle map are defined via  vector valued IM-forms (c.f \cite{BC,CSS}) that is the infinitesimal geometric data of the multiplicative poly-symplectic form \cite[Sec~2]{Martz}.

In the usual case, where poly-Poisson coincides with Poisson, there is another way to interpret the integration of a Poisson manifold, when it is integrable. It is done via the reduced phase space of the Poisson sigma model, a two dimensional topological field theory,  introduced by Schaller and Strobl \cite{Strobl} and independently by Ikeda \cite{Ikeda}. When the source manifold is a disk, and vanishing boundary conditions are stablished, the symplectic groupoid that integrates a given Poisson manifold is obtained via gauge reduction of the boundary fields of the theory, following the work of Cattaneo and Felder \cite{CF}. More precisely, the path-space construction of Lie groupoids integrating  Lie algebroids is incarnated as the reduced phase space of PSM, and it  is constructed from the following data:
\begin{enumerate}
\item Source data: The space-time for the PSM is given by a Riemannian surface $\Sigma$ (possibly with boundary), and a volume form $vol_{\Sigma}$. This is equivalent to give a $Q$-structure on the super manifold $T[1]\Sigma$, for which $Fun(T[1]\Sigma)= \Omega^{\bullet}(\Sigma)$.
\item Target data: The target space for the PSM is given by a Poisson manifold $(M, \Pi)$. In the language of super manifolds, it is equivalent to a $QP$-structure on $T^*[1]M$.
\end{enumerate}

Although not all Poisson manifolds can be integrated by a symplectic groupoid, it is always possible to obtain a relational grupoid as an infinite dimensional symplectic integration. Such groupoid object in the extended symplectic category of symplectic manifolds and Lagrangian submanifolds, can be obtained by the BV-BFV formulation of the Poisson sigma model. The structure maps of a symplectic groupoid are replaced by the evolution relations obtained by the theory, which happen to be Lagrangian \cite {ConTh}.

The main goal of this paper is to study the integration procedure for poly-Poisson structures in terms of an AKSZ theory, namely the poly-Poisson sigma model. In particular we have two specific objectives: to adapt the Poisson sigma model to the poly-Poisson case, and to introduce the relational groupoids arising from  this construction, as natural integrations of poly-Poisson structures. We prove that such integration always exists for integrable poly-Poisson manifolds and we characterize the construction in several examples of poly-Poisson structures. As a consequence of this construction we can also revisit three other known facts on Poisson geometry, namely the classification of integrations by relational symplectic groupoids via the path space construction, the equivalence of the integrability of a Lie algebroid in terms of the integrability of the Poisson structure of its dual, and finally the relation between Lie groupoids Morita which are equivalent and the Morita equivalence of the Poisson structures on the base manifold of the cotangent groupoid. For the case of classification, we observe in Proposition \ref{Class} that the relational poly-symplectic groupoid from the AKSZ construction is universal, i.e. other smooth integrations of the same poly-Poisson structure can be obtained via a quotient.  In the other hand, a known fact in Poisson geometry is that a Lie algebroid is integrable if and only if the Poisson structure on the dual is also integrable, so we can also wonder if such statement holds for the poly--Poisson case in lights of the direct sum of cotangent groupoid. In the case of the Morita equivalence we study the approach which says that the $C^*$-algebra of a groupoid is the deformation quantization (in Rieffel's approach) of the Poisson structure on the dual of its Lie algebroid.

The organization of the paper is the following: Section~2 is devoted to the introduction, examples, and some properties of poly-Poisson structures. In particular we devote a subsection to the study of lagrangian and coisotropic submanifolds. We also present Proposition~\ref{prop:MW} which states the Marsden-Weinstein-Meyer reduction for poly-symplectic manifolds (for a detailed exposition see \cite{Martz}). In Section~3 we give the statement and examples of the integration of poly-Poisson structures (as Lie algebroids) to poly-symplectic groupoids. The first specific objective of the paper, the poly-Poisson sigma model, is presented in Section~4. There we give a brief summary of the Poisson sigma model, specifying the key steps to extend to the poly-Poisson case.  In particular we use an extension of the Marsden-Weinstein-Meyer reduction for the infinite dimensional weak poly-symplectic structure of the Whitney sum of cotangent path-space of the base manifold. Also, all the examples developed in the previous sections are revisited in the lights of the sigma model. Section~5 exhibits the relational groupoids in the poly-symplectic formalism. Last section is devoted to some application of the theory. First we present different integrations via relational groupoids as quotients of the one arising from the poly-Poisson sigma model. As a second consequence is the Proposition~\ref{prop:integration A vs +rA*} where we stablish integrability of a Lie algebroid in terms of the integrability of the product poly-Poisson structure of its dual. The last application is devoted to a definition of Morita equivalence for poly--Poisson manifold and the proof of Proposition~\ref{prop:poly-Wmap} that relates Morita equivalence of Lie groupoids with Morita equivalence of poly--Poisson structures.

\textbf{Acknowledgements.}The authors thank Henrique Bursztyn and Alberto Cattaneo for fruitful discussions. I.C. was partially supported by the SSNF grant P300P2-154552. N.M. thanks University of Illinois, Urbana-Champaign for the hospitality.

\section{Poly-Poisson manifolds}
We begin with the definition of the main structures to consider in this paper:
\begin{definition}\label{Def:PP}
A \textbf{Poly-Poisson  structure} of order $r$, or simply an $r$-Poisson structure, on a manifold $M$ is a pair $(S,P)$, where $S\to M$ is a vector subbundle of $\TM\otimes \RR^r$ and $P:S\to TM$ is a vector-bundle morphism (covering the  identity) such that the following conditions hold:

\begin{itemize}
\item[(i)] $i_{P(\eta)} \eta=0$, for all $\eta\in S$,
\item[(ii)] $S^\circ= \{X\in TM| i_X\eta=0, \,\forall\,\eta\in S\}=\{0\}$,
\item[(iii)] the space of section $\Gamma(S)$ is closed under the bracket
\begin{equation}\label{Def:bracketHP} 
\lfloor \eta,\gamma \rfloor:=\Lie_{P(\eta)}\gamma-i_{P(\gamma)}d\eta \; \mbox{ for } \;\gamma,\eta\in \Gamma(S),
\end{equation}
and the restriction of this bracket to $\Gamma(S)$ satisfies the Jacobi identity.
\end{itemize}
We will call the triple $(M,S,P)$ an $r$-\textbf{Poisson manifold}. 
\end{definition}

As a first remark to the previous definition is that for the case $r=1$ we recover the usual notion of Poisson manifold. This comes just by noticing that $S=\TM$ (by item (ii)) and $\pi^\sharp:=P$ is a bivector (by item (i)), moreover the condition on the bracket on 1-forms says that $\pi$ is indeed a Poisson bivector.

Recall that Poisson structures are equivalently defined via a bracket on the algebra of smooth functions of the manifold. For the case of poly--Poisson structures we also can define a bracket operation but on the space of admissible functions. For this we must define the space of admissible functions as
$$C_{\mathrm{adm}}^\infty(M,S)=\{h\in C^\infty(M,\RR^r):  dh\in S\}$$
and the bracket between two admissible function is 
\begin{equation}\label{eq:bracket on functions}
\{h,g\}=\Lie_{P(dh)}g=(\Lie_{P(dh)}g_1,\ldots,\Lie_{P(dh)}g_r).
\end{equation}
As expected, it is possible to verify that the bracket satisfies the Jacobi identity and $\{h,fg\}=\{h,f\}g+f\{h,g\}$ for $f,g$ two admissible function do that $fg\in C_{\mathrm{adm}}^\infty(M,S)$. Despite these facts, the bracket on admissible functions does not define the poly--Poisson structure $(M,S,P)$, as will be shown in Remark~\ref{rmk:adm-brck no PP}. 

\subsection{Examples}\label{sec:examples} 
In this part we will present the basic examples of these structures. For a complementary and more detailed exposition see e.g. \cite{Martz}.

\begin{example}[Poly-symplectic structures]\label{ex:PS}
First we will suppose the case of $P$ is an isomorphism of vector bundles. By using the natural projections $p_j:S\to \TM$ we can define the following bundle map $$\w_j:TM\overset{P^{-1}}{\longrightarrow}S\to \TM.$$ Condition (i) in Def.~\ref{Def:PP} is the same as that each $\w_j$ is skew-symmetric, whereas condition (ii) means that $\cap \Ker \w_j=0$. Finally, condition (iii) is equivalent to $d\w_j=0$ for $j=1,\ldots,r$. 
What we get here is that $(M,\w_1,\ldots,\w_r)$ is an $r$-{\em poly-symplectic} manifold, or $r$-{\em symplectic}, as in \cite{Gu}. In the same way, any poly-symplectic manifold $(M,\w)$ with $\w=(\w_1,\ldots,\w_r)$ induces an $r$-Poisson manifold with $S_\w=\image (\w^\sharp)$ and $P_\w=(\w^\sharp)^{-1}$.

\end{example} 

The canonical example of this kind of structure is {\em the space of covelocities} $\opr \TM$ for a fixed manifold $M$. For this purpose we consider the canonical symplectic form $\w_{can}$ on $\TM$ and define 
\begin{equation}\label{eq:can.PS}
\w=(p_1^*\w_{can},\dots,p_r^*\w_{can})
\end{equation}

for $p_j:\opr \TM\to \TM$ the $j$th-projection.

\begin{remark}\label{rmk:adm-brck no PP}
Now we can construct a poly--Poisson structure whose bracket on admissible functions  does not define the structure. For this consider $(M_1\times M_2,S_\w,P_\w)$ as before with $\w_1$ a  symplectic form on $M_1$ and $\w_2$ closed 2-form with non-trivial kernel on $M_2$. In this situation, the bracket does not defines the vector field $P(dh)$ for an admissible function $h$ because $\{h,\cdot\}$ is not a derivation of $C^\infty(M_1\times M_2)$. 
\end{remark}

\begin{example}[Trivial structure]\label{ex:trivial}
Let $Q$ be a manifold. For each natural number $r$ we can view $Q$ as an $r$-Poisson manifold, and this can be done in several ways. For example, $S_1=\TQ\otimes \RR^r$ and $P_1=0$ define a Poly-Poisson structure on $Q$. The same is true by taking a collection $(\zeta_1,\dots, \zeta_r)$ of non degenerate $1$-forms on $Q$ and defining $S_2=\{(c_1\zeta_1,\dots, c_r\zeta_r):c_j\in \RR\}$ and $P_2=0$. Other examples are
\begin{center}\begin{tabular}{ll}
$S_3=\{\alpha\oplus\ldots\oplus\alpha\,|\, \alpha \in \TQ\}
\subset \opr  \TQ$& and $P_3=0$, \\
$S_4=\{\alpha\oplus0\oplus\ldots\oplus 0\,|\, \alpha \in \TQ\}
\subset \opr \TQ$ &and $P_4=0$.
\end{tabular}\end{center}

\end{example} 

\begin{example}[Product of Poisson structures]\label{ex:PPprod}
Consider $(M_i,S_i,P_i)_{i=1,\ldots,k}$ a family of $r_i$-Poisson manifolds and denote $M=M_1\times \cdots M_r$. Let $\bar{S}_j$ be the natural inclusion of $S_j$ into  $\TM$. 
Let $S\subset \TM\otimes \RR^r$ and $P:S\to TM$ be defined by 
\begin{equation}\label{eq:S in product}
\begin{cases}
S= \bar{S}_1\oplus \ldots \oplus \bar{S}_l \\ 
P(\alpha_1,\ldots,\alpha_l)=(P_1(\alpha_1),\ldots,P_l(\alpha_l)).
\end{cases}
\end{equation}
One may verify that $(M,S,P)$ is an $r_1+\cdots r_k$-Poisson manifold directly from the definition. 
\end{example} 

Note that this construction says that the direct product of $r$ Poisson structures is an $r$-Poisson structure. And when $(S_i,P_i)$ comes as the image of a $r_j$-symplectic structure, then $(S,P)$ is indeed $r_1+\cdots r_k$-symplectic with the vector valued 2-form 
$$\w=(p_1^*\w_1,\dots,p_r^*\w_r)$$
where $p_i:M\to M_i$ is the canonical projection.

\begin{example}[Constant poly-Poisson structure]\label{ex:constant}
The constant Poisson structure can be seen locally as the product of a symplectic and a trivial Poisson structure. We can use the same idea to define constant poly-Poisson structure on $\RR^k\times \RR^m$ for $(\RR^m,\w)$ an $s$-symplectic manifold. The constant $r+s$-Poisson is $S=(T^* \RR^k\otimes \RR^r)\oplus S_\w$ with $P=0\oplus P_\w$.

\end{example} 

\begin{example}[Linear poly-Poisson structures]\label{ex:linear}
Let $\LG$ be a Lie algebra, and let
$$
\LG_{(r)}:=\LG\times \overset{(r)}{\cdots} \times \LG,\;\;\;
\LG^*_{(r)}:=\LG^*\times \overset{(r)}{\cdots} \times \LG^*.
$$
For $u\in \LG$, let $u_j \in \LG_{(r)}$ denote the element
$(0,\ldots,0,u,0,\ldots,0)$, with $u$ in the $j$-th entry.
 Since $\LG^*$ is equipped with its
Lie-Poisson structure, $\LG^*_{(r)}$ naturally carries a product $r$-Poisson structure, as in Example~\ref{ex:PPprod}. More
important to us is the following {\it direct-sum} poly-Poisson
structure (see \cite{IMV} for the structure and \cite{Martz} for the construction of direct sum $r$-Poisson structure): over each
$\zeta=(\zeta_1,\dots ,\zeta_r)\in \LG^*_{(r)}$, we define
\begin{equation}\label{eq:bundle-r-linear}
S|_\zeta:=\{(u_1,\dots ,u_r)|u\in \LG\} \subseteq \opr T^*_\zeta
\LG^*_{(r)} \cong \opr \LG_{(r)},
\end{equation}

and the bundle map $P: S \to T \LG^*_{(r)}$,

\begin{equation}\label{eq:anchor-r-linear}
P_\zeta(u_1,\dots ,u_r):=(\pi_{\zeta_1}^\sharp(u),\ldots,\pi_{\zeta_r}^\sharp(u))=(\ad_{u}^*\zeta_1,\dots ,\ad_{u}^*\zeta_r)
\in T_\zeta \LG^*_{(r)}\cong \LG^*_{(r)}.
\end{equation}


\end{example}

\subsection{Some special submanifolds}

This part is devoted to the extension of the classical definition of symplectic foliation, coisotropic and Lagrangian submanifolds to the context of poly-symplectic and poly-Poisson. The more important remark in this subsection is that we recover, as in the usual case, the reduction of coisotropic submanifolds.

\subsubsection{Poly-symplectic foliation}
A direct consequence of Definition.~\ref{Def:PP} is that $S$ has a Lie algebroid structure with anchor $P$, hence the distribution $D := P(S) \subset TM$ is integrable, and its leaves define a singular foliation on $M$. Each leaf $\iota: \mathcal{O}\hookrightarrow M$ carries an $\mathbb{R}^r$-valued $2$-form $\w_\mathcal{O}$ determined by the condition
\begin{equation}\label{eq:H-Fol}
\w_\mathcal{O}^\sharp : T_m\mathcal{O} \to 
T_m^*\mathcal{O}\otimes \RR^r,\;\;\; P(\eta) \mapsto \iota^*\eta.
\end{equation}
The fact that the 2-form $\omega_\mathcal{O}$ on the leaf $\mathcal{O}$ is well defined follows from (i) in Definition \ref{Def:PP}, while (iii) guarantees that it is closed, and , (ii) guarantees that it is non-degenerate. Summarizing, $(M,S,P)$ determines a singular foliation on $M$ with poly-symplectic leaves.

A first remark on the poly-symplectic foliation of a $r$-Poisson structure is that, in contrast with the case $r=1$, different $r$-Poisson structures may correspond to the same poly-symplectic foliation, as shown in the next example.
\begin{example}\label{ex:PSfoliation is not unique}
Let $\w_s$ be a smooth family of $r$-poly-symplectic forms on $M$ parametrized by $s\in \mathbb{R}$ and define the following vector subbundles of $T^*(M\times \mathbb{R})\otimes \RR^r$:
\begin{center}\begin{tabular}{l}
$S_1\big{|}_{(m,s)}:=\{(i_X\w_s,\gamma_1\oplus \cdots \oplus \gamma_r)|X\in T_mM, \gamma_j\in T_s^*\mathbb{R}\},$\\
$S_2\big{|}_{(m,s)}:=\{(i_X\w_s,\gamma \oplus \cdots \oplus \gamma)|X\in T_mM, \gamma\in T_s^*\mathbb{R}\},$\\
$S_3\big{|}_{(m,s)}:=\{(i_X\w_s,\gamma\oplus 0 \cdots \oplus 0)|X\in T_mM,
\gamma\in T_s^*\mathbb{R}\}.$
\end{tabular}\end{center}
On each $S_j$ we define $P_j(i_X\w_s,\bar{\gamma})=X.$ Each $(S_j,P_j)$ is a poly-Poisson structure on $M\times \mathbb{R}$. Note that by definition of $(S_j,P_j)$ we obtain that $TM=P_j(S_j)$ and the leaves associated to this distribution are $M\times \{s\}$ for each $s\in \RR$.  
\end{example} 

\begin{example}\label{ex:productofcoadjoint}
In the cases of product of linear Poisson (as in Example~\ref{ex:PPprod}) and the linear $r$-Poisson (as in Example~\ref{ex:linear}) we obtain the same foliation which is the product of the coadjoint orbits.
\end{example}

\subsubsection{Lagrangian and coisotropic submanifolds}

For an $r$-symplectic manifold $(M,\w_1,\ldots,\w_r)$ we say that a subbundle $L\leq TM$ is Lagrangian (resp. isotropic or coisotropic) if $L^\w=L$ (resp. $L\leq L^\w$ or $L^\w\leq L$) where $$L^\w=\{X\in TM:i_Xi_Y\w_i=0 \mbox{ for all\ } Y\in L, i=1,\ldots,r\}.$$ 
In the symplectic case there is an equivalent statement for the Lagrangian subspaces: $L$ is Lagrangian if and only if  the linear map $\w^\sharp:L\to Ann(L)$, defined by contraction, is an isomorphism. For a poly-symplectic manifold $(M,\w_1,\ldots,\w_r)$ we can also define the induced map $\w^\sharp$ for which we get that
\begin{proposition}\label{prop:pL-L}
Let $(M,\w)$ be a $r$-symplectic manifold  and $L\leq TM$ a subbundle. If $\w^\sharp:L\to Ann(L)\otimes \RR^r$ is an isomorphism then $L^\w=L$.
\end{proposition}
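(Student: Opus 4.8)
The plan is to argue fiberwise and to lean on two ingredients: the non-degeneracy condition $\bigcap_{j}\Ker\w_j=\{0\}$ that is built into the definition of a poly-symplectic manifold (condition (ii) of Example~\ref{ex:PS}), and the surjectivity half of the hypothesis on $\w^\sharp$. Denote by $\w^\sharp\colon TM\to \TM\otimes\RR^r$ the total contraction $X\mapsto(i_X\w_1,\dots,i_X\w_r)$; by non-degeneracy this bundle map is injective. Observe first that in order for the restriction $\w^\sharp|_L$ to take values in $\Ann(L)\otimes\RR^r$ at all---which is implicit in the hypothesis that $\w^\sharp\colon L\to\Ann(L)\otimes\RR^r$ is an isomorphism---one needs $\w_j(X,Y)=0$ for all $X,Y\in L$ and all $j$; in other words $L\subseteq L^\w$ holds automatically, so only the inclusion $L^\w\subseteq L$ remains to be proved.

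For that inclusion, fix $m\in M$ and take $X\in(L^\w)_m$. By the definition of $L^\w$ each covector $i_X\w_j$ annihilates $L_m$, so $\w^\sharp(X)=(i_X\w_1,\dots,i_X\w_r)\in\Ann(L_m)\otimes\RR^r$. Since $\w^\sharp\colon L\to\Ann(L)\otimes\RR^r$ is onto, there is $Y\in L_m$ with $\w^\sharp(Y)=\w^\sharp(X)$. Then $X-Y\in\bigcap_j\Ker(\w_j)_m=\{0\}$ by non-degeneracy, hence $X=Y\in L_m$. As $m$ was arbitrary this gives $L^\w\subseteq L$, and combined with the previous step $L^\w=L$.

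Since the whole argument is pointwise linear algebra, it uses neither the closedness of the forms $\w_j$ nor any integrability, and I expect no genuine obstacle. The two points that need a little care are: making the implicit isotropy of $L$ explicit (it is forced by the target of $\w^\sharp$ being $\Ann(L)\otimes\RR^r$), and using that $\w^\sharp$ is injective on all of $TM$---not merely on $L$---which is exactly what upgrades $\w^\sharp(X)\in\w^\sharp(L)$ to $X\in L$. It is also worth remarking that, in contrast with the symplectic case, the converse need not hold: $L^\w=L$ does not force $\w^\sharp|_L$ to surject onto $\Ann(L)\otimes\RR^r$, because the dimension count that makes surjectivity automatic when $r=1$ breaks down when $r>1$ (there $\w^\sharp$ is only injective, not bijective). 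This is why the proposition is phrased as a one-directional implication.
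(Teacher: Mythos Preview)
Your argument is correct and follows essentially the same route as the paper's proof: isotropy is read off from the target of $\w^\sharp|_L$ being $\Ann(L)\otimes\RR^r$, and for coisotropy one takes $X\in L^\w$, observes $i_X\w\in\Ann(L)\otimes\RR^r$, and concludes $X\in L$. Your version is more explicit at the one delicate point---you spell out that surjectivity of $\w^\sharp|_L$ only gives $Y\in L$ with $\w^\sharp(Y)=\w^\sharp(X)$, and that it is the global injectivity of $\w^\sharp$ on $TM$ (the poly-symplectic non-degeneracy) that forces $X=Y$---whereas the paper compresses this into ``the isomorphism gives us that $X\in L$''.
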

In the literature (see \cite{Aw,FoGo}) first statement is known as poly-Lagrangian subspace and the existence of such subbundle is equivalent to the existence of a Darboux coordinate system for poly-symplectic manifolds. In contrast to the usual symplectic manifolds, there exists poly-symplectic manifolds that do not allow Darboux coordinates:
\begin{example}
Let $M$ be an oriented surface with volume form $\nu$. Consider $\eta$ any 2-form in $M$, then $(M,\nu,\eta)$ is a 2-symplectic manifold. Note that the dimensional identity $\dim L=2\dim L$ is not satisfy for all subspace $L\leq V$, hence there is no poly-Lagrangian subbundle of $M$\footnote{Following conditions for poly-symplectic Darboux theorem in \cite{Aw} we can conclude that for $M$ an orientable surface with volume form $\nu$ the poly-symplectic manifold $(M,\nu,\eta)$ has not Darboux coordinates}. 
\end{example}
\begin{proof}[of Proposition~\ref{prop:pL-L}]
Let $L$ be as in the proposition. As the image of the map $\w^\sharp$ is $Ann(L)\otimes \RR^r$ we obtain that $L$ is isotropic. To prove that is also coisotropic we fix $X\in L^\w$, then $i_X\w\in Ann(L)\otimes \RR^r$ and the isomorphism gives us that $X\in L$. 
\end{proof}
In contrast to the symplectic case, both statements are not equivalent as is shown in the following examples.

\begin{example}
Consider $M=\RR^3$ with poly-symplectic form $\w=(dx_1\wedge dx_2,dx_2\wedge dx_3)$. It is easy to verify that $L=span\{\partial x_2\}$ is Lagrangian but, by a dimensional argument, it is not poly-Lagrangian.
\end{example}

In the Poisson case we have two (equivalent) ways to define coisotropic subspaces. If $(M,\pi)$ is Poisson manifold, we say that $L\leq V$ is coisotropic if $\pi^{\sharp}(Ann(L))\subset L$ or equivalently  $Ann(L)\subset (Ann(L))^\perp$ where 
$$W^\perp:=\{\beta\in V^*:i_{\pi^ \sharp(\alpha)}\beta=0 \mbox{\ for all\ }\alpha\in W\}.$$ Both conditions also can be defined in the setting of a $r$-Poisson manifold $(M,S,P)$ with suitable changes as follows:
\begin{center}\begin{tabular}{lr}
(a)$P(S\cap Ann(L)\otimes \RR^r)\subset L$ & (b)$Ann(L)\otimes \RR^r\subset (S\cap Ann(L)\otimes \RR^r)^\perp$.
\end{tabular}\end{center}

In this case we again obtain that they are equivalent statements (see Lem.~2.3.18 \cite{NMA}). Any $L\leq V$ satisfying one of the previous relation is called poly-Poisson coisotropic subbundle. 

\begin{lemma}
For an $r$-symplectic manifold $(M,\w)$ we have that a subbundle $L\leq TM$ satisfies $L^\w\leq L$ if and only if $L$ is poly-Poisson coisotropic.
\end{lemma}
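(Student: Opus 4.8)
The plan is to unwind both sides of the equivalence directly in terms of the defining relation $S = \image(\w^\sharp)$ and the bundle map $P = (\w^\sharp)^{-1}$, exploiting the poly-symplectic case of Example~\ref{ex:PS}. Recall that for a poly-symplectic manifold the map $\w^\sharp \colon TM \to \TM \otimes \RR^r$ sends $X$ to $i_X\w = (i_X\w_1,\dots,i_X\w_r)$, and $S = \image(\w^\sharp) = \{ i_X\w \mid X \in TM\}$, so every element $\eta \in S$ is of the form $i_X\w$ with $X = P(\eta)$. The key dictionary is therefore: $\eta \in S \cap (Ann(L)\otimes\RR^r)$ if and only if $\eta = i_X\w$ for some $X\in TM$ with $i_X i_Y\w_i = 0$ for all $Y\in L$ and all $i$, i.e.\ if and only if $\eta = i_X\w$ with $X\in L^\w$. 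Consequently $P(S\cap Ann(L)\otimes\RR^r) = L^\w$ as subsets of $TM$, which is the crucial identity.

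Granting that identity, the equivalence is essentially a tautology: condition (a), which reads $P(S\cap Ann(L)\otimes\RR^r)\subset L$, becomes precisely $L^\w \subset L$. First I would carefully verify the two inclusions defining $P(S\cap Ann(L)\otimes\RR^r) = L^\w$. For ``$\subseteq$'': if $\eta\in S\cap Ann(L)\otimes\RR^r$, write $\eta = i_X\w$ with $X = P(\eta)$; the condition $\eta\in Ann(L)\otimes\RR^r$ means each component $i_X\w_i$ annihilates $L$, i.e.\ $i_Yi_X\w_i = 0$ for all $Y\in L$, so $X\in L^\w$. For ``$\supseteq$'': given $X\in L^\w$, set $\eta := i_X\w = \w^\sharp(X)\in S$; then each component lies in $Ann(L)$ by definition of $L^\w$, so $\eta\in S\cap Ann(L)\otimes\RR^r$ and $P(\eta) = X$. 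I should double-check the order of contractions and the skew-symmetry of each $\w_i$ (condition (i) of Definition~\ref{Def:PP}, which in the poly-symplectic case says each $\w_i$ is skew), so that $i_Xi_Y\w_i = 0 \iff i_Yi_X\w_i = 0$; this is where a sign or ordering slip could creep in, but it is routine.

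Finally, since conditions (a) and (b) were already noted to be equivalent for a general $r$-Poisson manifold (citing Lem.~2.3.18 of \cite{NMA}), it suffices to prove the lemma for, say, condition (a), and the lemma follows. I would state the argument as: by the identity $P(S\cap Ann(L)\otimes\RR^r) = L^\w$ established above, $L$ is poly-Poisson coisotropic (in form (a)) if and only if $L^\w\subseteq L$, which is exactly the claim. The main (and only) obstacle is making sure the identification $S = \image(\w^\sharp)$ is used correctly — in particular that membership in $S$ already forces the ``$\eta = i_{P(\eta)}\w$'' normal form, so that no element of $S\cap Ann(L)\otimes\RR^r$ is overlooked and $P$ restricted to this intersection surjects onto $L^\w$. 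Everything else is a direct unpacking of definitions.
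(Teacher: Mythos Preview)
Your argument is correct: the identity $P\bigl(S\cap (\Ann(L)\otimes\RR^r)\bigr)=L^\w$ holds exactly as you verify, and once it is in hand condition~(a) is literally the statement $L^\w\subseteq L$, giving both implications at once.

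The paper's own proof reaches the same conclusion but is organized differently: it treats the two implications separately, invoking characterization~(b) for $(\Rightarrow)$ and characterization~(a) for $(\Leftarrow)$. Your route is slightly more economical, since by first establishing the equality $P\bigl(S\cap (\Ann(L)\otimes\RR^r)\bigr)=L^\w$ you need only condition~(a) and never appeal to~(b) or to the external equivalence $(a)\Leftrightarrow(b)$ from \cite{NMA}. The underlying computation---writing every $\eta\in S$ as $i_X\w$ and reading off that $\eta\in\Ann(L)\otimes\RR^r$ exactly when $X\in L^\w$---is the same in both approaches; you simply package it as a single identity rather than two half-computations.
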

\begin{proof}
$(\Rightarrow)$ Uses condition (b) and $(\Leftarrow)$ uses (a).
\end{proof}

\begin{remark}
In \cite{Martz} it is defined the notion of poly-Poisson morphism between two $r$-Poisson manifold. Under the definition of coisotropicity we get that a map $f:(M,S_M,P_M)\to (N,S_N,P_N)$ is poly-Poisson morphism if and only if $f^*S_N\subset S_M$ and $Q=graph(f)$ is a coisotropic submanifold of $\subset M\times \bar{N}.$
\end{remark}

In the case of a submanifold $Q$ of a poly-symplectic manifold $(M,\w_1,\ldots,\w_r)$, this definition is equivalent to the fact that for any point $q\in Q$ the vector space $T_qQ$ is coisotropic in $T_qM$. As usual, we can verify that (possible non constant rank vector bundle) $TQ^\w$ is involutive, hence it induces a (possible singular) isotropic foliation. Following the same ideas in Lemma~2.7 and Lemma~5.35 \cite{McD} we obtain a coisotropic reduction.
\begin{proposition}
If $Q$ is a regular coisotropic submanifold of the poly-symplectic manifold $(M,\w)$ then the quotient $M' = Q/ Q^{\w}$ is a poly-symplectic manifold.
\end{proposition}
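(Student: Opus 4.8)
# Proof proposal for the coisotropic reduction

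The plan is to mimic the classical symplectic coisotropic reduction (Marsden–Weinstein style), applied component-wise to the $r$-tuple $\w = (\w_1,\dots,\w_r)$, while being careful that the non-degeneracy condition $\bigcap_i \Ker \w_i = 0$ is exactly what forces the quotient forms to again have trivial common kernel. First I would set up the notation: since $Q$ is regular coisotropic, the distribution $TQ^\w$ has constant rank, and by the remark preceding the proposition it is involutive; call $\F$ the resulting regular foliation on $Q$. Since the quotient $M' = Q/Q^\w$ is assumed to be a (smooth) manifold, let $\pi : Q \to M'$ be the submersion and $\iota : Q \hookrightarrow M$ the inclusion.

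The key steps, in order: (1) Show each pulled-back form $\iota^*\w_i$ is basic with respect to $\F$, i.e. it descends to a $2$-form $\bar\w_i$ on $M'$ with $\pi^*\bar\w_i = \iota^*\w_i$. For this one checks the two standard conditions: $i_X \iota^*\w_i = 0$ for all $X \in \Gamma(TQ^\w)$, which is immediate from the definition of $L^\w$ with $L = T_qQ$; and $\Lie_X \iota^*\w_i = 0$, which follows from Cartan's formula $\Lie_X = d\, i_X + i_X d$ together with $d\iota^*\w_i = \iota^* d\w_i = 0$ and the previous identity. (2) Each $\bar\w_i$ is automatically closed, since $\pi^* d\bar\w_i = d\pi^*\bar\w_i = d\iota^*\w_i = 0$ and $\pi$ is a surjective submersion. (3) The crucial point: show $\bigcap_i \Ker \bar\w_i = 0$ on $M'$. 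Pick $v \in T_{\pi(q)}M'$ in the common kernel; lift it to $\tilde v \in T_qQ$ with $d\pi(\tilde v) = v$. Then for each $i$ and each $w \in T_qQ$ we get $\iota^*\w_i(\tilde v, w) = \bar\w_i(v, d\pi w) = 0$, so $\tilde v \in T_qQ^\w$; hence $v = d\pi(\tilde v) = 0$. This is where the hypothesis $\bigcap_i\Ker\w_i=0$ on $M$ is not even needed directly — what matters is that $T_qQ^\w$ is precisely the kernel of the map $T_qQ \to \bigoplus_i T_q^*Q \otimes \RR^r$ given by contraction, which is built into the definition of $Q^\w$.

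I would then assemble these into the statement that $(M', \bar\w_1,\dots,\bar\w_r)$ is an $r$-symplectic manifold, and by Example~\ref{ex:PS} it carries the associated $r$-Poisson structure $(S_{\bar\w}, P_{\bar\w})$. The main obstacle is not any of the algebraic checks — those are routine once one has the right basic-form argument — but rather the \emph{regularity and smoothness bookkeeping}: verifying that $TQ^\w$ has constant rank along $Q$ so that $\F$ is a genuine regular foliation, and that the leaf space $M'$ is Hausdorff and the projection $\pi$ a submersion. In the classical case these are handled by the cited lemmas of McDuff–Salamon (Lemma~2.7 and Lemma~5.35 in \cite{McD}), and the honest content of the proof is to observe that those arguments are purely about a single closed $2$-form restricted to its characteristic distribution and therefore apply verbatim to each $\w_i$; the only new ingredient is the simultaneous non-degeneracy in step (3), which I have sketched above. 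So I would structure the write-up as: invoke \cite{McD} for the manifold structure of $M'$ and the constancy of rank, then give the short three-line arguments for descent, closedness, and the common-kernel condition.
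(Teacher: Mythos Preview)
Your proposal is correct and follows exactly the approach the paper indicates: the paper's proof consists solely of the sentence ``Following the same ideas in Lemma~2.7 and Lemma~5.35 \cite{McD} we obtain a coisotropic reduction,'' and your write-up is precisely the unpacking of that reference --- descent of each $\iota^*\w_i$ along the characteristic foliation $TQ^\w$, closedness by naturality, and the observation that $\bigcap_i\Ker\bar\w_i=0$ because any lift of a common-kernel vector lands in $TQ^\w=\ker d\pi$. One minor wording caution: when you say the McDuff--Salamon arguments ``apply verbatim to each $\w_i$,'' be sure the reader understands you are quotienting by the \emph{joint} null distribution $TQ^\w=\bigcap_i TQ^{\w_i}$ (not by each $TQ^{\w_i}$ separately), and that each $\iota^*\w_i$ is basic for this smaller foliation because $TQ^\w\subseteq TQ^{\w_i}$.
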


\subsection{Reduction by symmetries}
In this section we will give the basic idea for the reduction of the structure and in particular state the Marsden-Weinstein reduction for poly-symplectic manifolds. For this, we will assume that there is a (free and proper) action of a Lie group $\G$ on a poly-symplectic manifold $(M,\w)$ \footnote{We will see that at the infinite dimensional level, i.e. Banach manifolds with poly-symplectic structures, it is enough to have a Lie algebra action.}, that is $\varphi_g^*\w=\w$ for any $g\in \G$  where $\varphi:\G\times M\to M$ is the action of the Lie group $\G$ on $M$. An action is called reducible if
\begin{equation}\label{Def:p.reductible}
\begin{cases}
(a)\ S\cap \oplus_r \Ann(V) \mbox{\ \ has constant rank.\ } \\
(b)\ (S\cap \oplus_r \Ann(V))^\circ\subset V,
\end{cases}
\end{equation}

where $V\subseteq TM$ denotes the vertical bundle defined by this, or equivalently, is the kernel of the differential of the projection map $\Pi:M\to M/\G$.

In the case of a $r$-symplectic manifold equipped with a poly-symplectic action  This action is called {\em
hamiltonian} \cite{Gu,MR-RSV} if there is a {\em moment map}, i.e., a map $J:M\to
\LG^*_{(r)}$ that satisfies
\begin{center}\begin{tabular}{rlclr}
(i)&$J\circ \varphi_g=\Ad_g^*\circ J$& and &(ii)&$i_{u_M}\w=d\langle
J,u\rangle$.\\
\end{tabular}\end{center}
for all $u\in \LG$. Here $Ad_ g^*$ denotes the diagonal coadjoint action on $\LG^*_{(r)}$, and $u_M\in \mathfrak{X}(M)$ is the infinitesimal generator corresponding to $u\in \LG$. 

As a direct consequence of the previous definitions, it is possible to prove that 
$J^*(u_1, \dots, u_r)\in S_w $ for any $(u_1, \dots, u_r)\in S$ and 
$$P(u_1, \dots, u_r)=dJ(P_\w(J^*(u_1, \dots, u_r)))$$
where $(S_\w,P_\w)$ and $(S,P)$ are the $r$-Poisson structures in  Example~\ref{ex:PS} and Example~\ref{ex:linear} respectively. These two facts generalize, to the $r$-Poisson case, the well known fact that moment map of Hamiltonian action is a Poisson morphism (see Proposition~4.2 in \cite{Martz}).

A \textit{clean value} for $J$ is an element $\zeta\in \LG^*_{(r)}$ so that
\begin{equation}\label{Def:cleanvalue}
\begin{cases}
 J^{-1}(\zeta) \text{ \ is a submanifold of\ }M,\\
 \ker(d_xJ)=T_xJ^{-1}(\zeta), \text{ \ for all \ } x\in  J^{-1}(\zeta).
\end{cases}
\end{equation}
The submanifold $J^{-1}(\zeta)$ is invariant by the action of the isotropy group $\G_\zeta$. We assume that the $\G_\zeta$-action on $J^{-1}(\zeta)$ is free and proper, so we can consider the reduced manifold
$$
M_\zeta:=J^{-1}(\zeta)/\G_{\zeta}.
$$

This last condition, together with the $\G_\zeta$-invariance of $i_\zeta^*\omega$, implies that $i_\zeta^*\omega$ is basic, i.e., there exists a (unique) closed form $\w_{red}\in \Omega^2(M_\zeta, \mathbb{R}^r)$ so that
\begin{equation}\label{eq:wr}
\Pi_\zeta^*\w_{red}=i_\zeta^*\w.
\end{equation}

Note that the second condition on the definition of Hamiltonian action says that
$$
TJ^{-1}(\zeta) = \ker(dJ) = V^\omega.
$$
 One may also check that

$$
(\ker(dJ))^\omega = (V^\omega)^\omega = (S_\w\cap \oplus_r
\Ann(V))^\circ,
$$
and finally obtain that
$$
\ker(i_\zeta^*\omega) = (TJ^{-1}(\zeta))^\omega\cap TJ^{-1}(\zeta) =
(S_\w\cap \oplus_r \Ann(V))^\circ \cap T J^{-1}(\zeta).
$$
This leads us to conclude the following condition for the poly-symplectic Marsden-Weinstein reduction (for details we refer to Section~4.2 in \cite{Martz}):

\begin{proposition}\label{prop:MW}
The reduced form $\omega_{red} \in \Omega^2(M_\zeta, \mathbb{R}^ r)$
defined by \eqref{eq:wr} is $r$-symplectic if and only if
\begin{equation}\label{eq:redpoly}
(S\cap \oplus_r \Ann(V))^\circ\cap TJ^{-1}(\zeta) \subseteq V_\zeta
= V\cap TJ^{-1}(\zeta).
\end{equation}
\end{proposition}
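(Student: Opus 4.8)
The plan is to reduce the statement to the standard fact that a closed $\mathbb{R}^r$-valued $2$-form on a manifold is $r$-symplectic precisely when the common kernel of its components is trivial, and then to identify that common kernel for $\w_{red}$ in terms of the data already computed in the discussion preceding the proposition. First I would recall from \eqref{eq:wr} that $\Pi_\zeta^*\w_{red}=i_\zeta^*\w$, so for a tangent vector $\bar{X}\in T_{\bar{x}}M_\zeta$ lifted to $X\in T_xJ^{-1}(\zeta)$ we have $i_{\bar{X}}\w_{red}=0$ if and only if $i_X(i_\zeta^*\w)=0$ on $T_xJ^{-1}(\zeta)$, i.e. $X\in (TJ^{-1}(\zeta))^\w\cap TJ^{-1}(\zeta)=\Ker(i_\zeta^*\w)$. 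This is the chain of identifications already spelled out in the excerpt; the only new input needed is that passing to the quotient $M_\zeta=J^{-1}(\zeta)/\G_\zeta$ turns ``$\bar{X}=0$'' into ``$X\in V_\zeta$'', since $V_\zeta=V\cap TJ^{-1}(\zeta)$ is exactly the vertical bundle of $\Pi_\zeta$.

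Next I would assemble these observations. By the computation quoted before the proposition,
\begin{equation*}
\Ker(i_\zeta^*\w)=(S_\w\cap \opr\Ann(V))^\circ\cap TJ^{-1}(\zeta),
\end{equation*}
so the common kernel of the components of $\w_{red}$, as a subbundle of $TM_\zeta$, is the image under $d\Pi_\zeta$ of $(S_\w\cap \opr\Ann(V))^\circ\cap TJ^{-1}(\zeta)$. Therefore $\w_{red}$ is $r$-symplectic (non-degenerate in the poly sense, condition (ii) of Example~\ref{ex:PS} applied fiberwise) if and only if this image is the zero subbundle of $TM_\zeta$, which by definition of $\Pi_\zeta$ and its vertical bundle is equivalent to
\begin{equation*}
(S_\w\cap \opr\Ann(V))^\circ\cap TJ^{-1}(\zeta)\subseteq V_\zeta=V\cap TJ^{-1}(\zeta),
\end{equation*}
i.e. exactly \eqref{eq:redpoly}. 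Closedness of $\w_{red}$ is automatic from closedness of $\w$ and the fact that $\Pi_\zeta^*$ is injective on forms, so no further argument is needed there.

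I would also insert a brief check that the object appearing in \eqref{eq:redpoly} is indeed a smooth vector subbundle, so that ``$r$-symplectic'' makes sense for $\w_{red}$: the clean-value hypothesis \eqref{Def:cleanvalue} guarantees $J^{-1}(\zeta)$ is a submanifold with $TJ^{-1}(\zeta)=\Ker(dJ)=V^\w$, and the reducibility-type constant-rank condition on $S_\w\cap\opr\Ann(V)$ (analogous to (a) in \eqref{Def:p.reductible}) ensures $(S_\w\cap\opr\Ann(V))^\circ$ has constant rank, so the intersection with $TJ^{-1}(\zeta)$ is a genuine subbundle. The main obstacle I anticipate is not any single estimate but rather bookkeeping: making sure that the two polar/annihilator operations $(-)^\circ$ and $(-)^\w$ are being taken in the correct ambient spaces at each stage (in $TM$ along $J^{-1}(\zeta)$ versus inside $TJ^{-1}(\zeta)$ itself), and that the identification $\Ker(d_xJ)=T_xJ^{-1}(\zeta)=V_x^\w$ coming from the Hamiltonian condition $i_{u_M}\w=d\langle J,u\rangle$ is used consistently — this is where the poly case could differ subtly from the $r=1$ case, since $V^\w$ is cut out by $r$ forms simultaneously. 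Once those identifications are pinned down, the equivalence is a formal consequence of the displayed chain of equalities, exactly as in the symplectic Marsden--Weinstein argument (cf.\ \cite{Martz}).
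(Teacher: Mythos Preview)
Your proposal is correct and follows exactly the line the paper sketches: the paper computes $\Ker(i_\zeta^*\w)=(S_\w\cap\opr\Ann(V))^\circ\cap TJ^{-1}(\zeta)$ in the paragraph preceding the proposition and then defers the remaining step to \cite{Martz}, and that remaining step is precisely your observation that $d\Pi_\zeta$ kills exactly $V_\zeta$, so non-degeneracy of $\w_{red}$ is equivalent to \eqref{eq:redpoly}. Your added bookkeeping remarks about constant rank and the ambient spaces for $(-)^\circ$ versus $(-)^\w$ are appropriate caveats but do not change the argument.
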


Note that for a reducible action this reduction condition holds.

\begin{example}[Reduction of the spaces of covelocities]\label{ex:covelocities reduction}
Let $Q$ be a manifold equipped with a free and proper $\G$-action, and let $(M=\oplus_{(r)}T^*Q, \w)$  be the $r$-symplectic manifold of covelocities. We keep the notation $\pr_j:M\to T^*Q$ for the natural projection onto the $j$th-factor. The cotangent lift of the $\G$-action on $Q$ defines an action on $T^*Q$, which induces a $\G$-action on $M$ preserving the poly-symplectic structure (i.e., it is a poly-Poisson action). The first claim is that  
the $\G$-action on $M$ is reducible. To verify this fact, let $V\subseteq TM$ be the vertical bundle of the $\G$-action on $M$, so that $V_j=d\pr_j(V) \subseteq T(T^*Q)$ is the vertical bundle of the $\G$-action on the $j$th-factor $T^*Q$. Note that the natural projection $T^*Q\to Q$ induces a projection of $V_j^{\w_{can}}$ onto $TQ$, and one then sees that
$$
V_1^{\w_{can}} \times_{TQ}\ldots \times_{TQ} V_r^{\w_{can}}\subseteq
T(T^*Q) \times_{TQ}\ldots \times_{TQ} T(T^*Q)= TM
$$
is a vector subbundle, that we denote by $W$. One can now check that
\begin{equation}\label{eq:Scovel}
S_\w \cap \opr \Ann(V)=\{i_X\w\,|\, X\in W\},
\end{equation}
from where one concludes that condition (a) of \eqref{Def:p.reductible} holds. From \eqref{eq:Scovel}, one directly sees that
\begin{align*}
(S_\w\cap \opr \Ann(V))^\circ &= (V_1^{\w_{can}})^{\w_{can}}
\times_{TQ}\ldots \times_{TQ} (V_r^{\w_{can}})^{\w_{can}}\\
& = V_1 \times_{TQ}\ldots \times_{TQ} V_r = V,
\end{align*}
showing that (b) of \eqref{Def:p.reductible} also holds. So the action is reducible. In the same way as we induce the reducible action on $M$ we can induce a Hamiltonian action with moment map $J:M\to \LG^*_{(r)}$ defined by  
$$J(m,\alpha)(u):=\alpha_m(u_M(m)).$$
As the action is reducible and is Hamiltonian we obtain $r$-symplectic Marsden-Weinstein reduction at level 0.

\end{example} 

\section{Poly-Poisson structures and their integration}

It is well known that the infinitesimal counterpart of symplectic groupoids are Poisson manifolds. The aim of this section is to rephrase this in the context of poly-symplectic Lie groupoids. Under such definition, we will comment about its infinitesimal data which will coincides with the poly-Poisson manifolds. 

For  a Lie groupoid $\mathcal{G}$ over $M$ and a multiplicative form $\theta\in \Omega^k(\mathcal{G})$ is a differential form satisfying the relation
\begin{equation}\label{eq:mult-form}
m^*\theta = \pr_1^*\theta + \pr_2^*\theta
\end{equation}
for the natural projection maps $\pr_i:\gpd_{(2)}\to \gpd$ and the partial product $m:\gpd_{(2)}\to \gpd$. Observe that it still makes sense for $\mathbb{R}^r$-valued forms $\theta=(\theta_1,\ldots,\theta_r)$, which it simply says that each component $\theta_i$ is multiplicative. 

As a direct generalization of the notion of \textit{ symplectic groupoid}, see e.g. \cite{CDW,We}, we are led to the definition of the main structure in the section:

\begin{definition}\label{def:kr-symp. gpd.}
A $r$-\textbf{symplectic  groupoid} is a Lie groupoid $\mathcal{G}\rightrightarrows M$ together with a $r$-symplectic form $\omega= (\omega_1,\ldots,\omega_r) \in \Omega^{2}(\mathcal{G},\mathbb{R}^r)$ satisfying \eqref{eq:mult-form}. More explicitly, each $\omega_j\in \Omega^{2}(\mathcal{G})$ is closed, multiplicative, and $\cap_{j=1}^r \Ker(\omega_j)=\{0\}$.
\end{definition}

Here we present the structure of the product of symplectic groupoids (see Proposition 2.2.2~\cite{NMA})

\begin{proposition}\label{prop:prod}
The direct product of symplectic groupoids
$(\mathcal{G}_j,\omega_j)_{j=1,\ldots,r}$, naturally carries a
multiplicative poly-symplectic structure given by
$$
\omega=(p_1^*\omega_1,\ldots,p_r^*\omega_r),
$$
where $p_j: \mathcal{G}_1\times\ldots\times \mathcal{G}_r\to
\mathcal{G}_j$ is the natural projection.
\end{proposition}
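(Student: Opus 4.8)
The plan is to verify directly the three conditions of Definition~\ref{def:kr-symp. gpd.} for the form $\w=(p_1^*\w_1,\ldots,p_r^*\w_r)$ on the product Lie groupoid $\gpd=\gpd_1\times\cdots\times\gpd_r\rightrightarrows M_1\times\cdots\times M_r$, whose source, target, unit, inversion and partial multiplication are the products of the corresponding structure maps of the factors. Closedness of each component is immediate, since $d(p_j^*\w_j)=p_j^*(d\w_j)=0$ because pullback commutes with the exterior differential and each $\w_j$ is closed.

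For multiplicativity I would first record the canonical identification $\gpd_{(2)}\cong(\gpd_1)_{(2)}\times\cdots\times(\gpd_r)_{(2)}$ (a pair $((g_1,\ldots,g_r),(h_1,\ldots,h_r))$ is composable in $\gpd$ precisely when each $(g_j,h_j)$ is composable in $\gpd_j$), and let $q_j:\gpd_{(2)}\to(\gpd_j)_{(2)}$ be the induced projection. Writing $m_j$ and $\pr_1^j,\pr_2^j$ for the multiplication and the two projections of $\gpd_j$, one has $p_j\circ m=m_j\circ q_j$ and $p_j\circ\pr_i=\pr_i^j\circ q_j$ for $i=1,2$. Pulling back the multiplicativity identity $m_j^*\w_j=\pr_1^{j*}\w_j+\pr_2^{j*}\w_j$ along $q_j$ then gives $m^*(p_j^*\w_j)=\pr_1^*(p_j^*\w_j)+\pr_2^*(p_j^*\w_j)$, which is exactly \eqref{eq:mult-form} for each component of $\w$.

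For the kernel condition, let $v=(v_1,\ldots,v_r)$ be a tangent vector at a point $(g_1,\ldots,g_r)$ of $\gpd$ lying in $\cap_{j=1}^r\Ker(p_j^*\w_j)$. For each $j$, evaluating $i_v(p_j^*\w_j)=0$ on vectors of the form $(0,\ldots,0,w_j,0,\ldots,0)$ and using surjectivity of $dp_j$ yields $i_{v_j}\w_j=0$, hence $v_j=0$ by non-degeneracy of $\w_j$; therefore $v=0$, so $\cap_{j=1}^r\Ker(p_j^*\w_j)=\{0\}$. Together with the previous steps this shows $(\gpd,\w)$ is an $r$-symplectic groupoid.

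The argument is essentially routine; the only point deserving care is setting up the identification $\gpd_{(2)}\cong\prod_j(\gpd_j)_{(2)}$ and checking that $m$, $\pr_1$, $\pr_2$ intertwine correctly with the factor projections $q_j$, after which multiplicativity follows by pulling back and non-degeneracy by restricting to slices. The same reasoning, simply dropping the multiplicativity clause, proves the poly-symplectic assertion made after Example~\ref{ex:PPprod}.
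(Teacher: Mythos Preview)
Your argument is correct and is the natural direct verification of the three conditions in Definition~\ref{def:kr-symp. gpd.}. The paper itself does not supply a proof of this proposition, deferring instead to Proposition~2.2.2 of \cite{NMA}, so there is nothing further to compare.
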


\begin{example}\label{ex:kr-cotangent gpd}
The symplectic manifold $\TQ$ is a symplectic groupoid over $Q$, with respect to fibrewise addition; the source and target maps coincide with the projection $\TQ\to Q$. A direct consequence of the previous constructions shows that the vector bundle over $Q$,
$$\TQ\otimes \RR^{r}\simeq \opr\TQ=\TQ \oplus \cdots \oplus  \TQ$$
is a Lie groupoid with objects manifold $Q$ and is endowed with $(p_1^*\w_{can},\dots ,p_r^*\w_{can})$ multiplicative $r$-symplectic form making $\opr  \TQ \to Q$ a $r$-symplectic groupoid.

\end{example} 

The main result of this subject is the following integration theorem, for the proof see \cite{Martz}.

\begin{theorem}[Integration of poly-Poisson structures]\label{Thm:integration}
If $(\mathcal{G}\rightrightarrows M,\w)$ is a $r$-symplectic groupoid, then there exists a unique $r$-Poisson  structure $(S,P)$ on $M$ such that $S=\image(\mu)$ while $P$ is determined by the fact that the target map $t:\mathcal{G}\to M$ is a higher-Poisson morphism.
 
Conversely, let $(M,S,P)$ be a $r$-Poisson  manifold and $\mathcal{G}\rightrightarrows M$ be a $s$-simply-connected groupoid integrating the Lie algebroid $S\to M$. Then there is a $\w\in \Omega^{2}(\mathcal{G},\mathbb{R}^r)$, unique up to isomorphism, making $\mathcal{G}$ into a $r$-symplectic groupoid for which $t:\mathcal{G}\to M$ is a higher-Poisson morphism.
\end{theorem}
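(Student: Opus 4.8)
The plan is to reduce the statement to the classical correspondence between symplectic groupoids and Poisson manifolds (together with its infinitesimal, IM-form incarnation) applied componentwise, with the poly-symplectic nondegeneracy condition $\cap_j \Ker(\omega_j)=\{0\}$ being what upgrades the componentwise data to a genuine $r$-Poisson structure. For the first implication I would argue as follows. Given an $r$-symplectic groupoid $(\mathcal{G}\rightrightarrows M,\omega)$, each $\omega_j$ is a closed multiplicative $2$-form, so by the theory of IM-forms (\cite{BC,CSS}) it corresponds to a bundle map $\mu_j\colon A\to T^*M$, where $A=\mathrm{Lie}(\mathcal{G})$; assembling these gives $\mu=(\mu_1,\dots,\mu_r)\colon A\to \TM\otimes\RR^r$. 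I would set $S:=\image(\mu)$ and verify it is a subbundle (this uses that the total IM-form $\mu$ has locally constant rank, which I expect to follow from the poly-symplectic condition exactly as in \cite[Sec.~2]{Martz}). Conditions (i) and (ii) of Definition~\ref{Def:PP} are precisely the pointwise translations of skew-symmetry of each $\omega_j$ and of $\cap_j\Ker(\omega_j)=\{0\}$, and condition (iii) — closedness of the bracket and the Jacobi identity — is the infinitesimal shadow of $d\omega_j=0$ for all $j$, again following the componentwise IM-form dictionary. The anchor $P$ is then forced: the requirement that $t\colon\mathcal{G}\to M$ be a higher-Poisson morphism, unwound through \eqref{eq:H-Fol}, determines $P$ uniquely as the composite of $\mu^{-1}$ (onto $S$) with the anchor of $A$. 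Uniqueness of $(S,P)$ is immediate since both $S$ and $P$ are pinned down by these conditions.

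For the converse, suppose $(M,S,P)$ is an $r$-Poisson manifold. By Definition~\ref{Def:PP}, $S$ is a Lie algebroid with anchor $P$, so we may take $\mathcal{G}\rightrightarrows M$ to be its source-simply-connected integration. The inclusion $S\hookrightarrow \TM\otimes\RR^r$ splits as $r$ bundle maps $\mu_j\colon S\to T^*M$; I would check each $\mu_j$ is an IM-form for the Lie algebroid $S$ — closedness of the Koszul-type bracket \eqref{Def:bracketHP} is exactly the IM-compatibility condition — and then invoke the integration of IM-forms to multiplicative forms on a source-simply-connected groupoid to produce closed multiplicative $\omega_j\in\Omega^2(\mathcal{G})$, unique. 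Setting $\omega=(\omega_1,\dots,\omega_r)$, multiplicativity holds componentwise; nondegeneracy $\cap_j\Ker(\omega_j)=\{0\}$ must be extracted from condition (ii) $S^\circ=\{0\}$ via a dimension count along source fibers (the kernel of $\omega_j$ at a point, restricted to a source fiber, is controlled by $\Ker\mu_j$, and the intersection over $j$ corresponds to $S^\circ$ lifted to $\mathcal{G}$). Finally, that $t$ is a higher-Poisson morphism and that $(S,P)$ is recovered as in the first part follows by construction, and uniqueness up to isomorphism is inherited from the uniqueness in the IM-form integration.

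The main obstacle I anticipate is the nondegeneracy step in the converse direction: showing that $\cap_j\Ker(\omega_j)=\{0\}$ on all of $\mathcal{G}$, not merely on $M$, given only $S^\circ=\{0\}$ on $M$. One has to propagate the pointwise nondegeneracy condition from the unit section to the whole groupoid using multiplicativity — concretely, at an arrow $g$ one decomposes $T_g\mathcal{G}$ using left- or right-translations of the algebroid and the source-fiber tangent space, computes $\Ker(\omega_j)_g$ in terms of $\Ker\mu_j$ and $P$, and checks the intersection collapses. This is the poly-analogue of the classical fact that a source-simply-connected integration of a Poisson manifold carries a genuinely symplectic (not just presymplectic) multiplicative form, and I would model the argument on the proof of that fact, handling the extra $\RR^r$-components simultaneously; the delicate point is that the individual $\omega_j$ are generally degenerate, so the collapse genuinely requires taking the intersection and cannot be done one $j$ at a time.
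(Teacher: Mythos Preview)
The paper does not actually prove this theorem: immediately before the statement it says ``for the proof see \cite{Martz}'', and no argument is given in the present text. So there is no in-paper proof to compare against; what you have written is a reconstruction of the argument that \cite{Martz} presumably contains.

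That said, your outline is consistent with the approach the paper itself advertises in its introduction, where it states that the subbundle $S$ and the map $P$ ``are defined via vector valued IM-forms (c.f.\ \cite{BC,CSS}) that is the infinitesimal geometric data of the multiplicative poly-symplectic form \cite[Sec.~2]{Martz}''. Your componentwise use of the IM-form dictionary, with the poly-symplectic nondegeneracy $\cap_j\Ker(\omega_j)=\{0\}$ matching $S^\circ=\{0\}$, is exactly the mechanism being alluded to. One point to tighten in the forward direction: you should also argue that $\mu$ is \emph{injective} (equivalently, that $\mu\colon A\to S$ is an isomorphism), not merely of constant rank, since otherwise $P$ cannot be defined as ``$\mu^{-1}$ followed by the anchor''; this injectivity is again a consequence of $\cap_j\Ker(\omega_j)=\{0\}$ restricted to $\Ker(ds)|_M$. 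The obstacle you flag in the converse --- propagating nondegeneracy from the units to all of $\mathcal{G}$ via multiplicativity --- is indeed the substantive step, and your sketch of how to handle it (decompose $T_g\mathcal{G}$ by translations and control $\cap_j\Ker(\omega_j)_g$ in terms of $\Ker\mu$ and $S^\circ$) is the right shape.
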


We say that a poly-symplectic groupoid integrates a poly-Poisson structure if they are related as in the statement of  Theorem \ref{Thm:integration}. It is easy to observe that this correspondence between source-simply-connected poly-symplectic groupoids and poly-Poisson manifolds (with integrable Lie algebroid) extends the well-known relationship between (source-simply connected) symplectic groupoids and Poisson manifolds when $r=1$.

\begin{example}[Integrating the poly-symplectic structure]\label{ex:integration-PS}
The poly-Poisson structure in Example~\ref{ex:PS} is integrated by the pair groupoid $M\times M\rightrightarrows M$, equipped with the higher-symplectic  structure $t^*\omega - s^*\omega \in \Omega^{2}(M\times M, \mathbb{R}^r)$, where $s,t$ are the source and target maps on the pair groupoid, i.e $t(x,y)=x$ and $s(x,y)=y$.
In the case that $M$ is not simply-connected, the $s$-simply connected Lie groupoid integrating $(M,\w)$ is the fundamental groupoid $\Pi(M)$ with higher-simplectic form given by the pull-back  $F^*(t^*\omega - s^*\omega)$ by the covering map $F:\Pi(M)\to M\times M$ which is groupoid morphism.

\end{example}

\begin{example}[Integrating the trivial structure]\label{ex:integration-trivial}
The integration of these poly-Poisson structures $(S_j,P_j)_{j=1,2,3,4}$ are $S_j$ itself, viewed as a vector bundle over $Q$ with groupoid structure given by fibrewise addition and equipped $r$-symplectic form the pull-back of the canonical multisymplectic form $\w_{can}$ of $\TQ\otimes \RR^r$ in Example~\ref{ex:kr-cotangent gpd}.

\end{example}

The previous description of the possible types of trivial poly-Poisson structure and its integration leads us to define the trivial $r$-Poisson structure over a manifold $Q$ as $S=\TQ\otimes \RR^r$ with $P=0$ because is the integration of the $r$-symplectic version of the cotangent bundle.

\begin{example}[Integrating the product structure]\label{ex:integration-product}
Recall the poly-Poisson product in Example~\ref{ex:PPprod}. Suppose that all the Poisson manifold are integrable with symplectic groupoid $(\gpd_j\to M_j,\w_j)$. The poly-Poisson  $(S,P)$ in Example~\ref{ex:PPprod} is integrable and its integrating poly-symplectic groupoid is given by the direct product Lie groupoid $$\gpd=\gpd_1\times \ldots\times \gpd_l\rightrightarrows M.$$

\end{example}

\begin{example}[Integrating constant poly-Poisson structures]\label{ex:integration-constant}
In the usual Poisson case we get constant Poisson structure on $\RR^k\times \RR^{2n}$ as the infinitesimal version of the symplectic groupoid $\gpd=T^* \RR^k\times (\RR^{2n}\times \RR^{2n})$ with multiplicative symplectic form $\w_{can}+ \pr_1^*\w_{can}-\pr_2^*\w_{can}$ where $\w_{can}$ is the canonical symplectic form in the respective manifold. For the constant structure in Example~\ref{ex:constant}, the integrating $r+s$-symplectic groupoid is $(T^* \RR^k\otimes \RR^r)\times \RR^m\times \RR^m\rightrightarrows \RR^{k+m},(\w_{can},\pr_1^*\w-\pr_2^*\w)$ following the construction in the previous example.

\end{example}

\begin{example}[Integrating the linear case]\label{ex:integration-linear}

Recall that for a Lie algebra $\LG$ the space $\LG^*_{(r)}$ has associated two $r$-Poisson structures, the product and the direct-sum (see Example~\ref{ex:linear}). The first one is integrated by the direct product of $r$ copies of the symplectic groupoid $(T^*\G,\w_{can)}$. In this case we over each $\zeta=(\zeta_1,\dots ,\zeta_r)\in \LG^*_{(r)}$, we have 
$$S_\zeta=T_{\zeta_1}^*\LG^*T_{\zeta_1}^*\LG^*\overset{r}{\cdots}\oplus T_{\zeta_r}^*\LG^*$$
and the bundle map $P: S \to T \LG^*_{(r)}$ is $P_\zeta(u_1,\ldots,u_r)=(\pi_{KKS}^\sharp|_{\zeta_1}(u_1),\ldots,\pi_{KKS}^\sharp|_{\zeta_r}(u_r))$.

The second way its by the consideration of the Lie groupoid action $\G\ltimes \LG^*_{(r)}$ with structural maps induced by the source and target 
\begin{center}\begin{tabular}{cc}
$s(g,\zeta)=\zeta,$&$t(g,\zeta)=\Ad_g^*\zeta$.\\
\end{tabular}\end{center}
Using the identification $T^*\G \cong \G \times\LG^*$ (by right translation), we see that
$$  T^*\G \otimes \RR^r \cong \G\times \LG_{(r)}^*,$$
so we may consider $T^*\G \otimes \RR^r$ as a Lie groupoid, and its canonical poly-symplectic structure $\w_{can}$ makes it into a poly-symplectic groupoid.

\end{example} 

It is worth to compare some facts of symplectic groupoids in the poly-symplectic case. Recall that  for symplectic groupoids $(\gpd\rightrightarrows M,\w)$ we have that the unit manifold is Lagrangian and that the graph of the multiplication is also Lagrangian in $\gpd\times \gpd\times \bar{\gpd}$. In the poly-symplectic case we have the following result and counterexamples regarding the relationship between ploy-symplectic structures and Lagrangian structures.

As we have seen in Proposition~\ref{prop:pL-L}, the fact that $gr(m)$ is Lagrangian is weaker than poly-Lagrangian, here we show that for any symplectic groupoid $(\gpd\rightrightarrows M,\w)$ we can construct a 2-symplectic groupoid $(\gpd\rightrightarrows M,\w,\eta)$ that is not poly-Lagrangian. For this just take $\eta=t^*\eta_0-s^*\eta_0$ for $\eta_0\in \Omega^2(M)$ and verify that $(\w,\eta)^\sharp:Tgr(m)\to Ann(Tgr(m))\otimes \RR^2$ is not isomorphism.
For the unit map we can do a similar construction. Let $(\gpd\rightrightarrows M,\w)$ be a symplectic groupoid and $\mathcal{H}\rightrightarrows M$ another Lie groupoid. By the construction in Proposition~\ref{prop:prod} we can prove that the product groupoid $\gpd\times \mathcal{H}$ is a poly-symplectic groupoid with $\RR^2$-valued 2-form $(\w,0)$. It is clear that the graph of the unit map in $\gpd\times \mathcal{H}$ is isotropic. Now take $v\in T_m\mathcal{H}\backslash T_mM$ and verify that $(\w,0)(X+Y,Z+v)=0$ for any $X,Y,Z\in T_mM$. This last claim means that the graph of the unit map in $\gpd\times \mathcal{H}$ is not coisotropic, thus it is neither Lagrangian nor poly-Lagrangian.

\begin{remark}
In Example~\ref{ex:productofcoadjoint}, we obtained the case of two different $r$-Poisson structures with the same leaf space, since the foliations, given by product of coadjoint orbits, coincide. Moreover, in the previous examples we get that its integrations are not isomorphic. Indeed, we will show that they are not even Morita equivalent. For this note that the normal representation of  the integration of the first $r$-Poisson structure $\gpd$, is $N\gpd=T\G\times \overset{r}{\cdots}\times T\G$  whereas for the second $r$-Poisson structure the normal representation is $N\Theta=T\G$ for the corresponding  integration. It follows that these two integrations are not Morita equivalent as poly-symplectic groupoids(see e.g. the construction in Sec.3.4 and Thm.4.3.1 in \cite{dH}). 
\end{remark}

\section{The sigma model}
This section is divided into two parts, the first one will describe the classical Poisson Sigma Model (PSM) and the construction of the integrating groupoid of a Poisson manifold. In the second part we extend this construction to the case of $r$-Poisson structures, adapting the sigma model conveniently so we obtain PPSM as a particular case of $r$-Poisson structures.

\subsection{The PSM}

Here we give a quick review on the subject making clear the key points in order to extend to our case of interest. For more detail see \cite{CF,ConTh}.

From the AKSZ construction, and considering only fields with ghost number 0 \footnote{For more details on the superfield interpretation of PSM, see e.g. \cite{AKSZ}.}, it
follows that the space of bulk fields of this theory is the space of maps \footnote{vector bundle morphisms.} $\mathcal F=\mbox{Map}(T\Sigma, T^*M)$, whereas the space of boundary fields is $\mathcal F_{\partial}=\mbox{Map}(T\partial \Sigma, T^*M)$. An element $\phi_{\partial} \in\mathcal F_{\partial} $ is parametrized by the tuple $(X,\eta)$ where 

\begin{enumerate}
\item $X \in \mbox{Map}(\partial \Sigma, M)$ (boundary position fields),
\item $\eta \in \Gamma(\mbox{Hom}(T\partial \Sigma, X^*(T^*M)))$ (boundary momentum fields).
\end{enumerate}

The space $F_{\partial}$ can be naturally equipped with a weak symplectic structure $w$ given by 
\begin{equation}
\omega \left (  (X, \eta),(\tilde X, \tilde{\eta}) \right )= \int_0^1 (X\tilde{\eta}-\tilde X \eta)(t) \, dt.
\end{equation}
 
 The PSM action is a functional on $\mathcal F$, where $\phi \in \mathcal F$ is  given in coordinates $X \in \mbox{Map}(\Sigma, M)$ (position fields) and $\eta \in \Gamma(\mbox{Hom}(T\Sigma, X^*(T^*M)))$ (momentum fields). In this coordinates, the action is written as:
\begin{equation}
S(\phi)=\int_{\Sigma} \langle \eta, dX \rangle+ \frac 1 2 \langle  \Pi^{\sharp}(X)\eta, \eta \rangle.
\end{equation}
Restricted to the boundary, this functional has the same form:

\begin{equation}
S_{\partial}(\phi_{\partial})=\int_0^1\langle \eta, dX \rangle+ \frac 1 2 \langle  \Pi^{\sharp}(X)\eta, \eta \rangle.
\end{equation}
The kinetical term $\int_{\Sigma} \langle \eta, dX \rangle$ arises from the AKSZ construction for the source data, and it corresponds to the action functional for abelian 2-dimensional BF theory. The potential term $\int_{0}^1\frac 1 2 \langle  \Pi^{\sharp}(X)\eta, \eta \rangle$ arises from transgression of the Hamiltonian function $\frac 1 2 \langle  \Pi^{\sharp}(X)\eta, \eta\rangle$ into $F_{\partial}$.

Solving the variational problem $\delta S=0$ on the disk with vanishing boundary conditions \cite{CF} we obtain the Euler-Lagrange space 
$\mathcal C= \mbox{Mor}(T\Sigma, T^*M),$
that is, the space of Lie algebroid morphisms between $T\Sigma$ and $T^*M$.
Similarly, restricted to the boundary we obtain
$\mathcal C_{\partial}=\mbox{Mor}(T\partial \Sigma, T^*M)$. 
In \cite{CF} we can find the following theorem

\begin{theorem}\label{Gpd}
$C_{\partial}$ is a coisotropic Banach submanifold of $\mbox{Map}(T[0,1], T^*M)$ and it is the space of cotangent paths, i.e. Lie algebroid maps between $T[0,1]$ and $T^*M$.
\end{theorem}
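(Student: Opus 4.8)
The plan is to prove the two assertions separately: first that $\mathcal{C}_\partial = \mathrm{Mor}(T[0,1], T^*M)$ really is the space of cotangent paths, and then that it sits inside the Banach manifold of all boundary fields as a coisotropic submanifold. For the first claim, I would recall that a Lie algebroid morphism from $T[0,1]$ (with the standard Lie algebroid structure) to $T^*M$ (with the Lie algebroid structure induced by the Poisson bivector $\Pi$) is precisely a bundle map $\eta : T[0,1] \to T^*M$ covering some $X : [0,1] \to M$ that intertwines the anchors and respects the brackets; unwinding the anchor condition gives $dX = \Pi^\sharp(X)\eta$ on $[0,1]$, and the bracket-compatibility condition is automatic in dimension one since $T[0,1]$ has a global frame $\partial_t$ with vanishing self-bracket. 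This is exactly the definition of a cotangent path, so the identification is essentially a matter of spelling out the AKSZ Euler-Lagrange equations obtained from $\delta S_\partial = 0$ and matching them against the Lie algebroid morphism equations, following \cite{CF}.

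For the coisotropicity, I would work on the ambient weak symplectic Banach manifold $\mathrm{Map}(T[0,1], T^*M)$ with the symplectic form $\omega$ displayed above, and exhibit $\mathcal{C}_\partial$ as the zero set of a moment-map-type constraint. The standard approach (from \cite{CF,ConTh}) is to observe that the gauge symmetries of the PSM act on boundary fields, that this action is generated by a family of Hamiltonian functions indexed by $\beta \in \Gamma(X^*T M)$ (or by the space of paths in $T^*M$ serving as Lagrange multipliers), and that $\mathcal{C}_\partial$ is the common zero locus of these Hamiltonians. A subset cut out as the zero set of a collection of functions whose Hamiltonian vector fields are tangent to that zero set is coisotropic; equivalently, one checks that the constraints are first class, i.e. their mutual Poisson brackets vanish on $\mathcal{C}_\partial$. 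The first-class property here reflects the fact that the Poisson bracket of constraint functionals reproduces the constraint algebra of the PSM, which closes because $\Pi$ is Poisson (Jacobi identity for $\Pi$).

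Concretely, I would: (1) set up the cotangent-path equation $dX - \Pi^\sharp(X)\eta = 0$ as a smooth map $F$ from the Banach manifold of boundary fields into a suitable Banach space of sections, (2) verify that $F$ is a submersion at points of its zero set $\mathcal{C}_\partial$ (so that $\mathcal{C}_\partial$ is a genuine Banach submanifold — this uses that the linearization has a right inverse, a linear ODE solvability statement), (3) identify the annihilator $(T\mathcal{C}_\partial)^\perp$ of the tangent space with respect to $\omega$ as the image of $dF^*$, and (4) show $(T\mathcal{C}_\partial)^\perp \subseteq T\mathcal{C}_\partial$ by a direct computation that uses the Jacobi identity of $\Pi$ to close the bracket. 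Throughout, the weakness of the symplectic form requires care: one must check that the relevant annihilators and Hamiltonian vector fields actually exist (lie in the right function spaces), which is where the disk/interval boundary conditions and integration by parts come in.

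The main obstacle I expect is step (2)–(4) in the weak-symplectic Banach setting: establishing that the constraint map is a submersion onto a closed subspace and that the $\omega$-orthogonal complement is not just contained in but equals something one can compare with $T\mathcal{C}_\partial$, all without the Riesz-type identifications available in the finite-dimensional or strong-symplectic case. In other words, the hard part is the functional-analytic bookkeeping needed to make "first class constraints $\Rightarrow$ coisotropic" rigorous for $\mathrm{Map}(T[0,1], T^*M)$; the algebra (the constraint bracket closing via Jacobi for $\Pi$) is the easy, purely formal part. Since the statement is quoted from \cite{CF}, I would ultimately cite that reference for the technical submersion and coisotropy verification, presenting here only the identification of $\mathcal{C}_\partial$ with cotangent paths and the outline of why the constraints are first class.
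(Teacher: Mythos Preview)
Your proposal is correct and follows essentially the same approach as the paper: the paper does not give its own proof of this theorem but quotes it from \cite{CF}, and the brief discussion it adds afterward (the Hamiltonian action of $P_0\Omega^1(M)$, the equivariant moment map $\mu$, and the identification $C_\Pi=\mu^{-1}(0)$, hence coisotropic) is exactly the moment-map/first-class-constraint outline you sketch. Your plan to spell out the cotangent-path identification and then cite \cite{CF} for the Banach-submanifold and coisotropy verification is precisely what the paper does, only with more detail on your side.
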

In the case where $\Sigma$ is a disk with two disjoint vanishing boundary sectors (see \cite{CF, ConTh}) it is proven that
\begin{theorem} The following data
\begin{eqnarray*}
G_0&=& M \\
G_1&=& \underline{C_{\partial}}\\
G_2&=&\{ [X_1, \eta_1], [X_2, \eta_2] \vert X_1(1)=X_2(0)\}
\\
m&\colon & G_2 \to G:= ([X_1, \eta_1], [X_2, \eta_2]) \mapsto [(X_1* X_2, \eta_1* \eta_2)] \\
\varepsilon&:& G_0 \to G_1:= x\mapsto [X\equiv x, \eta\equiv 0] \\
s&\colon &G_1 \to G_0:= [X, \eta]\mapsto X(0) \\
t&\colon &G_1 \to G_0:= [X, \eta]\mapsto X(1) \\
\iota&:& G_1 \to G_1:= [X, \eta] \to [i^* \circ X,i^* \circ \eta]\\
&\mbox{                   }& i\colon [0,1]\to [0, 1]:= t\to 1-t, 
\end{eqnarray*}
correspond to a symplectic groupoid that integrates the Lie algebroid $T^*M$. \footnote{here $*$ denotes path concatenation.}
\end{theorem}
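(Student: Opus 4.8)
The plan is to verify that the explicit data $(G_0,G_1,G_2,m,\varepsilon,s,t,\iota)$ satisfies the axioms of a groupoid object, that $G_1$ carries a symplectic structure, that $m$, $\varepsilon$ and $\iota$ are compatible with it in the sense required of a symplectic groupoid, and finally that the Lie algebroid of this groupoid is canonically $T^*M$ with the Poisson anchor. The starting point is Theorem~\ref{Gpd}: $C_\partial \subseteq \mbox{Map}(T[0,1],T^*M)$ is a coisotropic Banach submanifold, so the reduced space $\underline{C_\partial}=G_1$ is (at least formally, i.e.\ under the usual assumptions on the quotient being smooth) a symplectic Banach manifold, with symplectic form descending from $\omega$ in equation~(4.3). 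First I would spell out the reduction: the characteristic distribution of $C_\partial$ is exactly the gauge foliation coming from the Hamiltonian symmetries of the PSM (the inner symmetries generated by sections of $T^*M$ vanishing at the endpoints), and $\underline{C_\partial}$ is the space of cotangent paths modulo homotopy with fixed endpoints. This identifies $G_1$ with the Weinstein groupoid (source-simply-connected integration) of the Lie algebroid $T^*M$, which is the content one must extract from \cite{CF}.

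Next I would check the groupoid structure maps. Concatenation of cotangent paths descends to a well-defined map $m$ on $G_2$ because homotopy classes rel endpoints are respected by gluing; associativity follows from the standard reparametrization homotopy, exactly as for the fundamental groupoid. The unit $\varepsilon(x)=[X\equiv x,\eta\equiv 0]$ is a genuine cotangent path (the zero momentum over a constant base path is a Lie algebroid morphism), and the left/right unit laws hold up to the usual concatenation-with-constant homotopy. The inverse $\iota$ is orientation reversal $t\mapsto 1-t$ of the path, which sends a cotangent path to a cotangent path and realizes $m(g,\iota(g))=\varepsilon(t(g))$ after a contraction homotopy. The source and target $s([X,\eta])=X(0)$, $t([X,\eta])=X(1)$ are clearly well-defined on homotopy classes and are submersions. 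This part is routine: it is the verification that the path-space construction produces a groupoid, and it is formally identical to the proof that $T[0,1]$-paths in a Lie algebroid $A$ form $\Pi_1(A)$.

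For the symplectic-groupoid compatibility I would argue that each structure map is obtained by symplectic reduction of a Lagrangian relation on the unreduced space $\mbox{Map}(T[0,1],T^*M)$: the graph of $m$ sits inside $G_1\times G_1\times \overline{G_1}$ as the reduction of the gluing relation, which is coisotropic/Lagrangian because the three path-spaces share the boundary evaluation data; the unit section $\varepsilon(M)\subseteq G_1$ is Lagrangian because it is cut out by $\eta\equiv 0$, a Lagrangian condition for the form in~(4.3); and $\iota$ is anti-symplectic since reversing path orientation flips the sign of $\int_0^1(X\tilde\eta-\tilde X\eta)$. Together with multiplicativity of $\omega$ — which follows from the additivity $S_\partial(\phi_1*\phi_2)=S_\partial(\phi_1)+S_\partial(\phi_2)$ of the boundary action under concatenation, differentiated to give $m^*\omega=\pr_1^*\omega+\pr_2^*\omega$ — this shows $(G_1,\omega)$ is a symplectic groupoid. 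Finally, to see it integrates $T^*M$ with its Poisson anchor $\Pi^\sharp$, I would linearize: the Lie algebroid $A(G_1)=\ker(ds)|_{\varepsilon(M)}/\ldots$ is identified with infinitesimal cotangent paths over a fixed point, i.e.\ with $T^*M$, and the anchor computed from $dt$ reproduces $\Pi^\sharp$ because the base-path endpoint moves in the direction dictated by the Hamiltonian vector field in the PSM equations of motion; this is exactly Theorem~\ref{Thm:integration} applied to $r=1$.

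The main obstacle is the infinite-dimensional analysis underlying the reduction $\underline{C_\partial}=C_\partial/\!\!\sim$: showing that the gauge foliation is nice enough (e.g.\ that the quotient is a Banach manifold, or working in a formal/diffeological category if it is not), that the weak symplectic form on $\mbox{Map}(T[0,1],T^*M)$ reduces to an honest symplectic form on $G_1$, and that all the concatenation maps are smooth. I would handle this exactly as in \cite{CF} and \cite{ConTh} — either by restricting to the case where the Poisson manifold is integrable so that $G_1$ is genuinely a finite-dimensional symplectic groupoid, or by passing to the relational (extended symplectic category) formulation where Lagrangian relations replace maps — and for the purposes of this theorem I would simply cite the construction there, since the poly-Poisson generalization in the next subsection is what this paper actually develops.
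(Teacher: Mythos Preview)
Your proposal is correct and follows exactly the line the paper indicates: the paper does not give an independent proof of this theorem but states it as a result from \cite{CF, ConTh}, remarking only that the symplectic form arises from symplectic reduction and that multiplicativity follows from the additivity of the integral defining $\omega$ under path concatenation---precisely the mechanism you invoke. Your outline is in fact more detailed than the paper's sketch, and your final paragraph correctly anticipates that the analytic issues are deferred to \cite{CF}.
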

The symplectic form is the one arising from symplectic reduction and it is multiplicative due the fact that groupoid multiplication comes from path concatenation and the additivity property of the integral defining the symplectic form on $\mathcal F_{\partial}$. For further details see \cite{CF}.
This construction is also expressed as the Marsden-Weinstein reduction of the Hamiltonian action  of the (infinite dimensional) Lie algebra
$P_0\Omega^1(M):= \{\beta \in  \mathcal C^{k+1}(I, \Omega^1(M)) \mid \beta(0)=\beta(1)=0\}$
with Lie bracket
\begin{equation}
[\beta,\gamma](u)=d\langle \beta(u), \Pi^{\sharp} \gamma(u) \rangle- \iota_{\Pi^{\sharp}(\beta(u))} d\gamma(u)+ \iota_{\Pi^{\sharp}(\gamma(u))} d\beta(u)
\end{equation}
on the space $T^*(PM)$, on which the equivariant moment map $ \mu\colon  T^*(PM) \to P_0\Omega^1(M)^*$  is described by the equation 
\begin{equation}
\langle \mu(X,\eta), \beta \rangle= \int_0^1 \langle dX(u)+ \Pi^{\sharp}(X(u))\eta(u), \beta(X(u),u) \rangle du.
\end{equation}\\
It turns out that $C_{\Pi}$ is the preimage of $0$, under the equivariant moment map $\mu$, thus $C_{\Pi}$ is coisotropic. In addition to it, the characteristic distribution $TC_{\Pi}^{\omega}$ has finite codimension, more precisely, $2 \dim (M)$, and it is given in local coordinates by the following
\begin{eqnarray*}
\delta_{\beta}X^i(t)&=&-\pi^{ij}(X(t))\beta_j(t)\\
\delta_{\beta}\eta_i(t)&=& d_{t}\beta_i(t) + \partial_i\pi^{jk}(X(t))\eta_j(t)\beta_k(t).
\end{eqnarray*}



\subsection{The PPSM}
Now we are prepared to introduce a two dimensional TFT such that its target encodes a poly-Poisson structure and its reduced phase space recovers the poly-symplectic integration.

Following the construction of the previous section we consider $\Sigma$ to be $D^2$, the two dimensional disk, such that its boundary $\partial \Sigma$ consists of four intervals $I_0,I_1,I_2,I_3$ under the condition that $I_j$ and $I_{j+1}$ (mod 4) intersect in exactly one point, and both $I_0$ and $I_2$ are vanishing boundary sectors. Given a poly-Poisson structure, the target space of the associated Poly-Poisson sigma model (PPSM) is the bundle $(S,P)$. In this case, the space of bulk fields for PPSM is given by
\begin{equation} \F^{PP}=\mbox{Map}(T\Sigma, T^*M\otimes \mathbb R^r) 
\end{equation}
whereas the space of boundary fields is 
\begin{equation}
\F^{PP}_{\partial}=\mbox{Map}(T\partial \Sigma, T^*M \otimes \mathbb R^r)
\end{equation}
Using the same boundary conditions as in the PSM, we can identify $\F^{PP}_{\partial}$ with the $r$-th Whithey sum of the cotangent bundles of the path-space of $M$. Therefore
\[\F^{PP}_{\partial}:=\oplus_r T^*(\pth(M))\cong \pth(\oplus_r T^*M).\]

\begin{proposition}
$\F^{PP}_{\partial}$ is a weak poly-symplectic Banach manifold.
\end{proposition}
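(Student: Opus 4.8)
The plan is to reduce this to the standard fact that each cotangent bundle of a path space $T^*(\pth(M))$ carries a weak symplectic form, which is essentially the content of the construction recalled for the PSM: the form $\omega((X,\eta),(\tilde X,\tilde\eta))=\int_0^1 (X\tilde\eta-\tilde X\eta)(t)\,dt$ on $\mbox{Map}(T\partial\Sigma,T^*M)$ is a weak symplectic structure on the Banach manifold of boundary fields. Under the identification $\F^{PP}_\partial\cong \pth(\oplus_r T^*M)\cong \oplus_r T^*(\pth(M))$, a tangent vector at a field $(X,\eta_1\oplus\cdots\oplus\eta_r)$ decomposes componentwise, and I would simply set
\begin{equation}\label{eq:polyweakform}
\w=(\w_1,\ldots,\w_r), \qquad \w_j\big((X,\eta),(\tilde X,\tilde\eta)\big)=\int_0^1 \big(X\tilde\eta_j-\tilde X\eta_j\big)(t)\,dt,
\end{equation}
i.e. $\w_j$ is the pull-back of the PSM weak symplectic form along the $j$-th projection $\pr_j:\oplus_r T^*(\pth(M))\to T^*(\pth(M))$. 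This is the poly-analogue of the canonical poly-symplectic form $(p_1^*\w_{can},\dots,p_r^*\w_{can})$ on $\opr\TM$ from \eqref{eq:can.PS}, now in the infinite-dimensional Banach setting.

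The steps, in order, are: (1) verify that $\F^{PP}_\partial$ is a Banach manifold and record the identification $\F^{PP}_\partial\cong\oplus_r T^*(\pth(M))$ explicitly on tangent spaces, so that the splitting of tangent vectors into $r$ components is canonical; (2) check each $\w_j$ is a smooth closed $2$-form — closedness is immediate since $\w_j=\pr_j^*\w$ and $\w$ is closed (being, up to sign, the canonical symplectic form on a cotangent bundle of a Banach manifold), and smoothness/bilinearity/skew-symmetry are inherited from the PSM case; (3) check the weak non-degeneracy condition appropriate to the poly-symplectic setting, namely $\cap_{j=1}^r\Ker(\w_j^\flat)=\{0\}$, where $\w_j^\flat$ is the induced (injective but not surjective, hence "weak") map $T\F^{PP}_\partial\to T^*\F^{PP}_\partial$. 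This last point is where the only real content lies.

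For step (3): if a tangent vector $(\tilde X,\tilde\eta_1\oplus\cdots\oplus\tilde\eta_r)$ lies in $\Ker(\w_j^\flat)$ for every $j$, then for each $j$ and every test variation $(X,\eta_j)$ one has $\int_0^1(X\tilde\eta_j-\tilde X\eta_j)\,dt=0$; since $\w$ is weakly non-degenerate on $T^*(\pth(M))$ (this is precisely the weak symplecticity used in the PSM, cf. \cite{CF,ConTh}), the $j$-th component $(\tilde X,\tilde\eta_j)$ already vanishes as the image of $(\tilde X,\tilde\eta)$ under $d\pr_j$ — more carefully, $\tilde X=0$ from the pairing against $\eta_j$-variations and $\tilde\eta_j=0$ from the pairing against $X$-variations, for each $j$, whence the whole vector is zero. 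Thus $\cap_j\Ker(\w_j^\flat)=\{0\}$, which is the defining non-degeneracy of a (weak) poly-symplectic structure.

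The main obstacle I anticipate is purely functional-analytic bookkeeping rather than conceptual: one must be careful that the identification $\pth(\oplus_r T^*M)\cong\oplus_r T^*(\pth(M))$ is an isomorphism of Banach manifolds compatible with the weak cotangent pairings (the boundary momentum fields $\eta$ live in $\Gamma(\mbox{Hom}(T\partial\Sigma,X^*(T^*M)))$, and the $\RR^r$-valued version splits this fibrewise), and that "weak" is handled correctly — $\w_j^\flat$ is injective but its image is a proper dense subspace, so non-degeneracy must be phrased as injectivity of the flat maps, not as $T\F^{PP}_\partial$ being reflexively paired with its dual. Granting the PSM case, which the excerpt allows us to assume, none of this requires new estimates, and the argument is a direct transcription of Example~\ref{ex:PS} and \eqref{eq:can.PS} to the path-space level.
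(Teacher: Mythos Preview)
Your proposal is correct and follows essentially the same approach as the paper's own proof: both define the form as $\w_j=\pr_j^*\w_{can}$ via the projections $\pr_j:\oplus_r T^*(\pth(M))\to T^*(\pth(M))$, reduce closedness to the PSM case, and obtain weak non-degeneracy from the fact that $\cap_j\ker(d\pr_j)=\{0\}$ together with the weak symplecticity of each factor. Your write-up is more explicit about the pairing-against-test-variations step and the functional-analytic caveats, but the argument is the same product construction the paper invokes by reference to \eqref{eq:can.PS}.
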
 
Before we sketch the proof we should extended the definition of weak-symplectic from to the realm of $r$-symplectic manifolds. In a Banach manifold $B$, a weak $r$-symplectic structure is a closed skew-symmetric bilinear form $\w:TB\times TB\to \RR^r$ so that the induced map $\w^\sharp:TB\to TB^*\otimes \RR^r$ is an injective bundle map.
\begin{proof}
The direct proof is to show that 
\begin{align*}
\bar{\w}_\gamma(\delta(X_1,\eta_2),\delta(X_1,\eta_2))&=
(\dots,\int_0^1\w_{can}(\delta(X_1,\eta_2),\delta(X_1,\eta_2))(t)dt ,\dots )_{i=1,\dots,r}\\
&=(\dots,\int_0^1(\delta_1X_1 \delta_2\eta_2 -\delta_1X_2 \delta_2\eta_1)dt ,\dots )_{i=1,\dots,r}
\end{align*}
is a weak $r$-symplectic structure on $\F^{PP}_{\partial}$. But a more convenient proof is to show that this can be reach as the weak $r$-symplectic product structure. For this recall the definition in \eqref{eq:can.PS} and note that the canonical projection $pr_j:\F^{PP}_{\partial}\to T^*\pth(M)$ satisfies $\cap dpr_j=\{0\}$ that leads us to get the injectivity of $\bar{\w}^\sharp$.
\end{proof}

Now we will consider the Lie algebra of gauge symmetries for the PPSM. Let $\mathfrak g_0$ be the (infinite dimensional) Lie algebra of $\mathcal C^1$-maps $\alpha: [0,1] \to \Gamma(S)$ such that
\begin{enumerate}
\item $\alpha(0)=\alpha(1)=0, \forall \alpha \in \Gamma(S)$.
\item The Lie bracket is given by $ [ \alpha, \beta](t)= \lfloor \alpha(t),\beta(t) \rfloor$, i.e. the bracket is defined pointwise by using the bracket $\lfloor \cdot, \cdot \rfloor$ from the sections on $S$.
\end{enumerate}
By using the $r$-Poisson structure of $(S,P)$ in the definition of $\LG_0$ we can induce two operations:
\begin{align}\label{def:g-PPaction} 
\xi:\LG_0\to \mathfrak{X}(\F^{PP}_{\partial}); \hspace{1cm} 
&\xi_{\beta}(X)=-P_X(\beta)\\ \notag
&\xi_{\beta}(\eta) = d\beta + \partial P_x\eta \beta\\ \label{def:PPm.m.}
H:\LG_0\times \F^{PP}_{\partial}\to \RR^r ; \hspace{1cm}
&(\beta,(X,\eta))\mapsto \int_0^1\beta(dX) dt-\int_0^1\eta(P_X(\beta))dt
\end{align}

\begin{proposition}
Let $\beta$ be an element of $\mathfrak g_0$.
\begin{enumerate}
\item The action by $\beta$ is a lift of an action by $\beta$ on $\mbox{Map} ([0,1], M)$.
\item The action is Hamiltonian weak $r$-symplectic with the Hamiltonian function $H$.
\end{enumerate}
\end{proposition}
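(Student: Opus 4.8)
The plan is to model the argument on the corresponding facts for the ordinary Poisson sigma model from \cite{CF, ConTh}, working one $\mathbb{R}^r$-component at a time and then assembling the pieces. For part (1), I would first note that the vector field $\xi_\beta$ on $\F^{PP}_\partial$ defined by \eqref{def:g-PPaction} has its $X$-component $\xi_\beta(X) = -P_X(\beta)$ depending only on the position field $X$, not on $\eta$; hence projecting along the bundle map $\F^{PP}_\partial \cong \pth(\oplus_r T^*M) \to \mathrm{Map}([0,1],M)$ sends $\xi_\beta$ to a well-defined vector field on the path space of $M$, namely $t \mapsto -P_{X(t)}(\beta(t))$. That this is a genuine lift (i.e.\ $\xi_\beta$ is projectable and the projected object is again the infinitesimal generator of a $\mathfrak g_0$-action) follows because $\beta(0)=\beta(1)=0$ guarantees the flow preserves the path-space boundary conditions, and the $\eta$-component $d\beta + \partial P_X\eta\,\beta$ is precisely the linearization forced by requiring the pair $(X,\eta)$ to transform as a section of $\mathrm{Hom}(T\partial\Sigma, X^*(T^*M\otimes\mathbb{R}^r))$ — this is the same computation as in the $r=1$ case, carried out componentwise.

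For part (2), I would verify the Hamiltonian condition $\iota_{\xi_\beta}\bar\w = dH(\,\cdot\,,\beta)$ for the weak $r$-symplectic form $\bar\w$ established in the previous proposition. Since $\bar\w = (\dots, \int_0^1 p_j^*\w_{can}, \dots)_{j=1,\dots,r}$ is a product structure, it suffices to check the identity in each component, where it reduces to the classical Cattaneo–Felder computation: differentiating $H(\beta, (X,\eta)) = \int_0^1 \beta(dX)\,dt - \int_0^1 \eta(P_X(\beta))\,dt$ in the directions $\delta X$, $\delta\eta$, integrating by parts the $\int \beta(d\,\delta X)$ term (the boundary term vanishes because $\beta(0)=\beta(1)=0$), and matching against $\int_0^1 (\delta_1 X\,\delta_2\eta - \delta_2 X\,\delta_1\eta)\,dt$ evaluated on $\xi_\beta$. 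The term $\partial P_X\eta\,\beta$ in $\xi_\beta(\eta)$ is exactly what is needed to absorb the variation of $P_X$ with respect to $X$ inside $\int \eta(P_X(\beta))$. Equivariance of $H$ — i.e.\ $H(\,\cdot\,, [\alpha,\beta]) = \{H(\,\cdot\,,\alpha), H(\,\cdot\,,\beta)\}$ — is where the Jacobi identity and the defining bracket \eqref{Def:bracketHP} of the poly-Poisson structure enter: one computes $\xi_\alpha H(\,\cdot\,,\beta)$ and uses property (iii) of Definition~\ref{Def:PP} together with $i_{P(\eta)}\eta = 0$ (property (i)) to identify the result with $H(\,\cdot\,,\lfloor\alpha,\beta\rfloor)$, again componentwise.

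The main obstacle I expect is bookkeeping rather than conceptual: keeping track of which variations are ``test'' directions and which come from the flow of $\xi_\beta$, and ensuring all integration-by-parts boundary terms genuinely vanish in the Banach-manifold setting (the $\mathcal{C}^{k+1}$ regularity of paths and the vanishing $\alpha(0)=\alpha(1)=0$ must be invoked carefully, as in the functional-analytic setup of \cite{ConTh}). A secondary subtlety is that $\Gamma(S)$ is not all of $\Gamma(T^*M\otimes\mathbb{R}^r)$, so one must check that $\xi_\beta(\eta) = d\beta + \partial P_X\eta\,\beta$ actually stays tangent to $\F^{PP}_\partial$ and that $H$ takes values in $\mathbb{R}^r$ in a way compatible with the admissibility constraint; this is handled by the fact that $\beta(t) \in \Gamma(S)$ for all $t$ and $P$ is a bundle morphism covering the identity, so every operation above is performed within the subbundle $S$. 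Once the componentwise identities are in place, the product structure of $\bar\w$ assembles them into the claimed weak $r$-symplectic Hamiltonian action, and we are done.
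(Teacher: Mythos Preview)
Your proposal is correct, and for part~(1) it coincides with the paper's argument: the $X$-component of $\xi_\beta$ is $\eta$-independent, hence projectable to $\mathrm{Map}([0,1],M)$.

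For part~(2) the routes diverge. The paper does not verify $\iota_{\xi_\beta}\bar\w = dH(\cdot,\beta)$ directly. Instead it decomposes $H = f + J$ with $f(\beta,(X,\eta)) = \int_0^1 \beta(dX)\,dt$ and $J(\beta,(X,\eta)) = -\int_0^1 \eta(P_X(\beta))\,dt = \int_0^1 \eta(\xi_\beta X)\,dt$, and argues separately: $f$ is a ``topological'' moment map whose equivariance $f([\beta,\alpha]) = \beta\cdot f(\alpha)$ holds independently of the poly-Poisson structure (as in \cite{CF}); $J$ is a moment map by the general principle that a lifted action on an exact (poly-)symplectic manifold ($\bar\w = -d\bar\alpha$) admits the canonical moment map $\eta(\xi_\beta X)$. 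The sum is then the desired Hamiltonian. Your approach---a direct componentwise computation of the contraction identity, integration by parts, and a separate equivariance check invoking axioms (i) and (iii) of Definition~\ref{Def:PP}---is more explicit and makes clear exactly where each axiom enters, but is longer. The paper's decomposition is terser and isolates the structure-independent piece $f$, at the price of invoking the ``lifted action on exact symplectic $\Rightarrow$ canonical moment map'' fact without writing out the verification.
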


\begin{proof}
The fact that the vector field $\xi_{\beta}$ in the $X$-direction is $\eta$-independent implies that the action by $\mathfrak g_0$ is a lift of the action given by \ref{def:g-PPaction}.

In order to prove the second statement, we denote $f$ as the first integral in the definition of $H$ in \eqref{def:PPm.m.} and $J=-\int_0^1\eta(P_X(\beta))dt$. In the same way as in \cite{CF} we get that $f[\beta,\alpha]=\beta f(\alpha)$\footnote{Note that this relation can also be proved by seen $f$ as a topological moment map, i.e. the fact that is moment map is independent of the $r$-Poisson structure}. Also note that $J(\beta,(X,\eta))=\int_0^1\eta(\xi_\beta X)$ which is moment map for the $\LG_0$-action. This claim follows from the fact that that $\bar{\w}=-d\bar{\alpha}$ is exact and the action $\xi$ on $\F^{PP}_{\partial}$ is a lifting of an action on $\pth(M)$. These two facts show that $H=f+J$ is a moment map.
\end{proof}
Note that the argument in Example~\ref{ex:covelocities reduction} can be extended to the infinite dimensional case and we obtain a reducible weak $r$-symplectic action, hence the weak $r$-symplectic reduction at 0-level set can be done. 

In order to obtain a finite-dimensional reduction with 0-level  we need to prove the following result:

\begin{proposition}\label{prop:PS and Coisotropic sbmnfd}
The submanifold $\F(S,P)=\pth(S)\overset{\iota}{\to}\F^{PP}_{\partial}$ is weak $r$-symplectic submanifold with the 2-form $\w=\iota^*\bar{\w}$. Moreover, 
\begin{equation}\label{eq:CSP}
C_{(S,P)}=Mor(T\Sigma,S)=\{\hat{X}:=(X,\eta):X:\Sigma\to M,\eta\in \Gamma(T^*\Sigma,X^*S), dX+P_X(\eta)=0\}
\end{equation}
is coisotropic submanifold with finite codimension.
\end{proposition}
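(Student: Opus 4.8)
The plan is to mimic the finite-dimensional analysis of Example~\ref{ex:covelocities reduction} and the proof of Theorem~\ref{Gpd}, transplanting both to the path-space setting. First I would show that $\F(S,P)=\pth(S)$ is a weak $r$-symplectic submanifold. Pointwise, $\pth(S)\hookrightarrow \F^{PP}_\partial$ is the path space of the inclusion $S\hookrightarrow \opr T^*M$, and by the product description already established (the canonical projections $pr_j$ satisfy $\cap\, d\,pr_j=\{0\}$) it suffices to check the statement factor by factor: over $T^*\pth(M)$ this is exactly condition (ii) in Definition~\ref{Def:PP} for $S$, which guarantees $\cap_j \Ker\w_j=0$ fibrewise and hence injectivity of $\iota^*\bar\w^\sharp$. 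Concretely one computes $\Ker(\iota^*\bar\w)_\gamma$ at a path $\gamma$ and uses $S^\circ=\{0\}$ together with the nondegeneracy of the $L^2$-pairing $\int_0^1(X\tilde\eta-\tilde X\eta)\,dt$ on the remaining directions to conclude that the kernel is trivial. This step is essentially formal once the $r=1$ case of \cite{CF} is granted.

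The substantive part is the identification of $C_{(S,P)}$ as a coisotropic submanifold of finite codimension, and here I would follow the moment-map route rather than a direct computation. The preceding Proposition furnishes a Hamiltonian weak $r$-symplectic action of $\mathfrak g_0$ on $\F^{PP}_\partial$ with moment map $H=f+J$; restricting to $\F(S,P)=\pth(S)$ (which is $\mathfrak g_0$-invariant since $\xi_\beta$ is tangent to $\pth(S)$: the $X$-component lands in $P(S)\subset TM$ and the $\eta$-component is built from $d\beta$ and $\partial P_X\eta\,\beta$ with $\beta\in\Gamma(S)$) gives a Hamiltonian action on the weak $r$-symplectic manifold $(\F(S,P),\w)$. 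I would then show that $C_{(S,P)}=H|_{\F(S,P)}^{-1}(0)$: unwinding \eqref{def:PPm.m.}, the vanishing of $\langle H(X,\eta),\beta\rangle=\int_0^1\beta(dX+P_X(\eta))\,dt$ for \emph{all} $\beta\in\mathfrak g_0$ is equivalent, by the fundamental lemma of calculus of variations applied pointwise in $\Gamma(S)$ and using $S^\circ=\{0\}$, to the pointwise equation $dX+P_X(\eta)=0$ together with $\eta\in\Gamma(T^*\Sigma,X^*S)$ — which is precisely the condition that $(X,\eta)\in Mor(T\Sigma,S)$, i.e. a Lie algebroid morphism $T\Sigma\to S$ (the bracket-compatibility being automatic for morphisms out of the tangent algebroid of a surface, exactly as in the PSM). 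Since a preimage of a point under an equivariant moment map is coisotropic, this yields coisotropy of $C_{(S,P)}$.

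Finally, for the finite-codimension claim I would compute the characteristic distribution $TC_{(S,P)}^{\w}$ and show it has codimension equal to $\mathrm{rank}(S)=\dim M$ (compare the $2\dim M$ count for PSM, which here becomes $\dim M$ because the constraint $dX+P_X\eta=0$ plus membership $\eta\in X^*S$ already pins down the fibre directions of $\opr T^*M$ transverse to $S$). In local coordinates this amounts to linearising \eqref{eq:CSP} and reading off that the normal directions to $C_{(S,P)}$ inside $\F(S,P)$ are parametrised by $\mathbb R^r$-valued functions on $\partial\Sigma$ modulo the constraint cutting $S$ out of $\opr T^*M$, which is a finite-rank condition fibrewise; the gauge directions $\delta_\beta X=-P_X(\beta)$, $\delta_\beta\eta=d\beta+\partial P_X\eta\,\beta$ exhaust $TC_{(S,P)}^{\w}$ by the moment-map picture, and the quotient is finite-dimensional with the expected dimension. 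The main obstacle I anticipate is analytic rather than algebraic: justifying that $\F(S,P)$ and $C_{(S,P)}$ are genuine Banach submanifolds (i.e. that $0$ is a clean value of $H|_{\F(S,P)}$ in the sense of \eqref{Def:cleanvalue}, and that the relevant maps are submersions onto closed subspaces), which requires the appropriate Sobolev/$\mathcal C^{k+1}$ completions and an implicit-function-theorem argument paralleling \cite{CF}; the subbundle $S$ being possibly of non-constant-looking but fixed rank makes one double-check that $X^*S$ varies smoothly with $X$, but this is handled by the reducibility condition \eqref{Def:p.reductible} exactly as in Example~\ref{ex:covelocities reduction}.
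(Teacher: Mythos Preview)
Your plan for the first claim is essentially the paper's argument, though the paper is more explicit: it takes $\delta\bar\gamma\in\Ker(\iota^*\bar\w)$ and tests in two stages --- first against variations $\bar\eta$ with constant base path and linear fibre to force $\delta_1\bar\gamma_b\in S^{\circ}=\{0\}$, and then against general $\bar\eta$ to kill the fibre component $\delta_2\bar\gamma^i$. Your ``factor by factor'' phrasing is slightly misleading since $S$ is not a product of subbundles of $T^*M$, but your second sentence (compute the kernel, use $S^\circ=\{0\}$ and the $L^2$-pairing) is exactly this two-step test.

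For coisotropy you take a genuinely different route: you identify $C_{(S,P)}$ as the zero level of the restricted moment map $H|_{\F(S,P)}$ and invoke the general principle that such level sets are coisotropic. The paper instead simply adapts \cite[Thm.~3.1]{CF} directly, remarking that the relevant ODEs become $\RR^r$-valued but obey the same existence/uniqueness theory; the moment-map description you use is then recorded \emph{after} the proposition (your equation \eqref{eq:PSm.m in S}) as input for the Marsden--Weinstein reduction. Your route is cleaner conceptually and perfectly valid --- the argument $(V^\w)^\w\subset V^\w$ goes through in the weak poly-symplectic setting once you know $V\subset V^\w$ (equivariance of $H$), since $Y\in(V^\w)^\w$ forces $dH_\beta(Y)=-\w(\xi_\beta,Y)=0$ for all $\beta$ --- but it does require first checking that the $\LG_0$-action restricts to $\F(S,P)$ and that $H$ remains a moment map there, which the paper postpones.

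There is one concrete error in your codimension count: $\mathrm{rank}(S)$ is \emph{not} equal to $\dim M$ in general (e.g.\ for the trivial structure $S=T^*Q\otimes\RR^r$ one has $\mathrm{rank}(S)=r\dim Q$), and the codimension of the characteristic distribution --- equivalently the dimension of the reduced groupoid $\underline{C_{(S,P)}}$ --- is $\dim M+\mathrm{rank}(S)$, matching $2\dim M$ only when $r=1$. This does not affect the finiteness claim, which is all the proposition asserts.
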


\begin{proof}
For any element $\gamma\in \pth(S)$ we denote by $\gamma_b$ its base path and $(\gamma^1,\dots,\gamma^r)\in S_{\gamma_b}$. If $\bar{\gamma},\bar{\eta}$ are two curves in $\pth(S)$, then 
$$\w(\delta \bar{\gamma},\delta \bar{\eta})=\int_0^1(\dots,\delta_1\bar{\gamma}_b\delta\bar{\eta}^i-\delta_1\bar{\eta}_b\delta_2
\bar{\gamma}^i,\dots)_{i=1,\dots,r}dt.$$
We now condsider that $\delta \bar{\gamma}\in \Ker \w$, and in addition we assume that $\bar{\eta}$ has constant base path and is linear on the fibers, we obtain that
$$\int_0^1(\delta_1\bar{\gamma}_b\bar{\eta}^1,\dots,\delta_1\bar{\gamma}_b\bar{\eta}^r)dt=0$$
for any $(\bar{\eta}^1,\dots,\bar{\eta}^r)\in S_{\bar{\eta}_b}$, hence $\delta_1\bar{\gamma}_b\in S_{\bar{\eta}_b}^0=\{0\}$. This leads us to note that, for any curve $\bar{\eta}$ in $\pth(S)$ we get
$$\w(\delta \bar{\gamma},\delta\bar{\eta})=\int_0^1(-\delta_1\bar{\eta}_b\delta_2\bar{\gamma}^1,\dots,-\delta_1\bar{\eta}_b\delta_2\bar{\gamma}^r)dt=0.$$
As $\delta_1\bar{\eta}$ runs over $TM$, we can conclude that $\delta_2\bar{\gamma}^i=0$, which complete the assertion that $\delta\bar{\gamma}=0$ when $\delta\bar{\gamma}\in \Ker\w$.

For the claim that $C_{(S,P)}$ is coisotropic, we just adapt the same ideas as \cite[Thm.3.1]{CF} and note that the ODE's solving the initial valued problem are in our context vector valued ODE's which obey the same theory of existence and uniqueness of solutions.
\end{proof}
\begin{remark}
Note that the proof of the first claim does not use the existence of the bundle map $P$, so $\pth(S)$ is weak $r$-symplectic for any sub-bundle $S\leq \opr\TM$ so that $S^o=\{0\}$.
\end{remark}
Let us mention two important facts related to a given weak $r$-symplectic structure  $(\F(S,P),\w)$: 
\begin{enumerate}
\item The $\LG_0$-action on $\F^{PP}_{\partial}$ restricts to $\F(S,P)$.
 \item The map
\begin{align}\label{eq:PSm.m in S}
H:\LG_0\times \F(S,P)\to \RR^r \cong \RR^{*r} ; \hspace{1cm}
&(\beta,(X,\eta))\mapsto \int_0^1\beta(dX) +\beta(P_X(\eta))dt
\end{align}
is a moment map for the restricted action. 
\end{enumerate}
As a direct result we get the reducibility condition of the action, hence the weak-Marsden-Weinstein reduction $\underline{C_{(S,P)}}$ from $H$ at 0-level.

 Furthermore, as we also have the coisotropic reduction of finte codimension we obtain
\begin{theorem}\label{thm:PS-reduction of CSP}
If $(S,P)$ is integrable, then $\underline{C_{(S,P)}}$ is $r$-symplectic manifold. Moreover, it is equipped with a Lie groupoid structure that makes $\gpd=\underline{C_{(S,P)}}\rightrightarrows M$ a $r$-symplectic groupoid integrating $(S,P)$.
\end{theorem}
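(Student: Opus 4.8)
The plan is to mirror, step by step, the Cattaneo–Felder construction recalled in Section~4.1, replacing the scalar symplectic data by its $\mathbb{R}^r$-valued counterpart, and to invoke the poly-symplectic Marsden–Weinstein reduction (Proposition~\ref{prop:MW}) together with the coisotropic reduction in the proof of Proposition~\ref{prop:PS and Coisotropic sbmnfd}. First I would set up the two reductions that must be performed and check they commute: on the one hand, $C_{(S,P)}$ is coisotropic of finite codimension inside the weak $r$-symplectic Banach manifold $\F(S,P)$, so $\underline{C_{(S,P)}} = C_{(S,P)}/TC_{(S,P)}^{\w}$ is a finite-dimensional $r$-symplectic manifold provided the characteristic foliation is regular with manifold quotient; on the other hand, $C_{(S,P)}$ is exactly $H^{-1}(0)$ for the $\LG_0$-moment map \eqref{eq:PSm.m in S}, and the reduction by the gauge symmetries $\LG_0$ is the same quotient. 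Integrability of $(S,P)$ as a Lie algebroid is precisely the hypothesis that guarantees $\underline{C_{(S,P)}}$ is smooth Hausdorff: the characteristic leaves are the $\LG_0$-orbits, whose holonomy is controlled by the monodromy groups of $S$, so this is where the integrability assumption enters, exactly as in the $r=1$ case.

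Second, I would identify $\underline{C_{(S,P)}}$ as a set with the source-simply-connected groupoid integrating $S$: an element of $C_{(S,P)} = \mathrm{Mor}(T\Sigma, S)$ restricted to the boundary is an $S$-path (a Lie algebroid morphism $T[0,1]\to S$), and quotienting by the gauge action is $S$-homotopy, so $\underline{C_{(S,P)}} \cong \G(S)$, the Weinstein groupoid of $S$. The groupoid structure maps — multiplication by concatenation of $S$-paths, inversion by orientation reversal $i^*$, units by constant paths with zero $S$-component, source and target by the two endpoints — are transported exactly as in the theorem of Cattaneo–Felder quoted above; these are well defined on $S$-homotopy classes by the same arguments, since the only ingredient is the existence–uniqueness theory for the (now vector-valued) ODE $dX + P_X(\eta) = 0$, already used in the proof of Proposition~\ref{prop:PS and Coisotropic sbmnfd}.

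Third, I would verify that the reduced $\mathbb{R}^r$-valued form $\w_{red}$ on $\underline{C_{(S,P)}}$ is $r$-symplectic and multiplicative. Non-degeneracy of each component is precisely the content of the coisotropic/Marsden–Weinstein reduction: after quotienting by $TC_{(S,P)}^{\w}$ the common kernel $\cap_j \Ker(\w_{red})_j$ is trivial, which follows from the computation $\Ker(i_\zeta^*\omega) = (S\cap\oplus_r\Ann(V))^\circ \cap TJ^{-1}(\zeta)$ adapted to the infinite-dimensional setting, combined with the reducibility of the $\LG_0$-action (noted right after Proposition~\ref{prop:PS and Coisotropic sbmnfd}, via the infinite-dimensional version of Example~\ref{ex:covelocities reduction}). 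Closedness descends because $\w = \iota^*\bar\w$ is closed and basic. Multiplicativity of each $\w_j$ follows from the additivity of the integral defining $\bar\w$ over a concatenation of paths together with the fact that $m$ comes from concatenation — the same one-line argument as in the scalar case, applied componentwise. Finally, to see that this $r$-symplectic groupoid \emph{integrates} $(S,P)$ in the sense of Theorem~\ref{Thm:integration}, I would check that $S = \image(\mu)$ for the induced IM-form $\mu$ and that the target map $t\colon \underline{C_{(S,P)}}\to M$ is a higher-Poisson morphism; by the uniqueness clause of Theorem~\ref{Thm:integration} it then coincides (up to isomorphism) with the poly-symplectic groupoid of $(S,P)$, and since $\G(S)$ is source-simply-connected this pins it down completely.

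\textbf{Main obstacle.} The delicate point is not the algebra of the structure maps but the analysis underlying the two reductions in infinite dimensions: showing that the characteristic distribution $TC_{(S,P)}^\w$ is a genuine subbundle of finite, constant codimension (equal to $\dim M$ via the endpoint evaluations), that its quotient is a smooth Hausdorff manifold exactly when $S$ is an integrable Lie algebroid, and that the two reductions — coisotropic reduction of $C_{(S,P)}$ and Marsden–Weinstein reduction by $\LG_0$ — yield the same smooth structure. This is where one must be careful that the vector-valued moment map $H$ is still submersive onto $0$ in the appropriate (weak) sense and that reducibility of the $\LG_0$-action genuinely upgrades the weak $r$-symplectic form to a strong (non-degenerate) one after passing to the finite-dimensional quotient; all of this parallels \cite{CF,ConTh} but must be re-examined component by component because weak $r$-symplectic non-degeneracy is only the triviality of a common kernel, not an isomorphism $\w^\sharp$.
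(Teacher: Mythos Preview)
Your proposal is correct and follows essentially the same route as the paper: reduce $C_{(S,P)}$ (using the coisotropic/Marsden--Weinstein argument already in place), transport the groupoid structure maps from \cite{CF} verbatim, and obtain multiplicativity of the reduced $\mathbb{R}^r$-valued form from additivity of the transgressed form under path concatenation. The only substantive difference is in the final identification: the paper checks directly that $\ker(ds)\vert_{\varepsilon(M)}\cong S$ and that the induced anchor is $P$, whereas you go through the IM-form $\mu$ and the uniqueness clause of Theorem~\ref{Thm:integration}; these are equivalent. One small slip: the codimension of the characteristic distribution is $\dim M + \mathrm{rank}(S)$ (the dimension of the reduced groupoid), not $\dim M$ --- compare the $r=1$ case, where it is $2\dim M$.
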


\begin{proof}
It was already proved that $\underline{C_{(S,P)}}$ is $r$-symplectic. It remains to prove that is a Lie groupoid, but this is just an extension of the proofs in \cite{CF} with the structural map defined in the same way as in Theorem~\ref{Gpd}.

To end the proof, note that by the transgresion argument of the AKSZ construction,  the $r$-symplectic form $\Omega$ is multiplicative since it is compatible with path concatenation. By construction, we obtain that 
\[\ker(ds)\vert_{\varepsilon(M)}\cong S\]
and the kernel of the target map $t$ restricted to $\varepsilon (M)$ coincides with the map $P$. 

\end{proof}

\subsection{Examples of PPSM integration}
Here we present some construction of the PPSM in the special cases of trivial, $r$-symplectic, product and linear $r$-Poisson structures. The constructions follow same lines in \cite[Sec.~5]{CF}.

\begin{example}[The $r$-symplectic case:Example~\ref{ex:integration-PS}]
Note that for $(X,\eta)\in C_{(S,P)}$ we get that $\eta=-P_X^{-1}(dX)$. Hence, in the same way as in Sec.~5.2 of \cite{CF}, the groupoid $\gpd$ that integrates $(S,P)$ is the fundamental groupoid of $M$, and in the simply connected case is $M\times M$, with the induced $r$-symplectic form using the source and target maps. 

\end{example} 

\begin{example}[The trivial case: Example~\ref{ex:integration-trivial}]
Note that the space in \eqref{eq:CSP} consists on bundle maps with constant path $X:I\to M$ and continuous map $\eta:I\to S_X$. Moreover, the map
\begin{align*}
j:& \gpd\longrightarrow S\\
&(X,\eta)\mapsto (X(0),\int_I\eta dt)
\end{align*}
is an isomorphism. The only thing that we must to verify is that the map is well defined, but this follows from the fact that $X$ is constant, so $\eta$ and its Riemann sums over $I$ belong to the fiber $S_X$. The claim about the isomorphism follows in the same way as in \cite[Sec.5.1]{CF}. In conclusion, for a subbundle $S\leq \opr \TM$ with trivial anchor $P$, its integration is $\gpd=S\rightrightarrows M$ with $r$-symplectic form the restriction of the canonical one on $\opr \TM$.

\end{example} 

\begin{example}[The product case:Example~\ref{ex:integration-product}]
Here we consider two integrable $r$-Poisson structures $(M_i,S_i,P_i)_{i=1,2}$. By the product structure we can verify that the space $C_{(S,P)}$ is the space of solutions of the equation
$$X+P_X\eta=(X_1,X_2)+(P_1(\eta_1),P_2(\eta_2))=0;$$
that is, the composition of solutions $C_{(S_1,P_1)}$ and $C_{(S_2,P_2)}$. This remark leads us to obtain the isomorphism 
\begin{align*}
j:& \gpd\longrightarrow \gpd_1\times \gpd_2\\
&\hat{X}\mapsto (\hat{X_1},\hat{X_2})
\end{align*}
and conclude the integration as in Example~\ref{ex:integration-product}.

\end{example} 

As direct consequence of the previous example we get the PPSM integration of the constant structure as in Example~\ref{ex:integration-constant}.

\begin{example}[The linear case:Example~\ref{ex:integration-linear}]
Recall the definition of the linear $r$-Poisson structure on $\LG^*$ the dual of a finite dimensional Lie algebra $\LG$ in Example~\ref{ex:linear}. From that definition we can note that $S_\zeta\simeq \LG$. Using this identification and the trivialization $\G\times \LG^*_{(r)}\simeq \opr T^*\G$ we can define the following map:
\begin{align*}
j&: \gpd\longrightarrow \opr T^*\G\\
(X&,\eta)\mapsto (Hol(\eta),X(0)).
\end{align*}
The proof that the map $j$ is an isomorphism follows from the (adapted) proof of \cite[Theorem~5.2]{CF}.
\end{example} 

\begin{remark}\label{rmk:cotangent gpd}
The same arguments in Example~\ref{ex:linear} work when we begin with a Lie algebroid $A$ instead of a Lie algebra and we produce a poly--Poisson structure on $\oplus_{(r)} A^*$. The respective vector bundle \eqref{eq:bundle-r-linear} and the anchor map \eqref{eq:anchor-r-linear} are defined by using the Poisson structure of $A^*$ (see for example \cite{We}). The path-space integration of the poly--Poisson manifold $\oplus_{(r)} A^*$ is done in the same spirit of the previous example and the main result of \cite{C}. This procedure yields the integrating poly--symplectic groupoid $\oplus_{(r)}T^*\mathcal{G}\rightrightarrows \prod_{r}A^*$ for $\mathcal{G}\rightrightarrows M$ the source-simply connected Lie groupoid integrating $A\to M$.
\end{remark}

\section{Relational groupoids}
In this section we give a brief exposition of relational groupoids equipped with symplectic structure, which naturally appeared in the context of PSM before gauge reduction \cite{CC, ConTh}.
\begin{definition} \label{Rel}
A \textbf{relational symplectic groupoid} is a triple $(\mathcal G,\, L,\, I)$ where 
\begin{enumerate}
 \item $\mathcal G$ is a weak symplectic manifold. \footnote{In the infinite dimensional setting we restrict to the case of Banach manifolds.} 
\item $L$ is an immersed Lagrangian submanifold of $\mathcal G ^3.$
 \item $I$ is an antisymplectomorphism of $\mathcal G$ called the \emph{inversion},
\end{enumerate}

satisfying the six compatibility axioms A.1-A.6:
\begin{itemize}
\item The cyclicity axiom  (A.1) encodes the cyclic behaviour of the multiplication and inversion maps for groups, namely, if $a,b,c$ are elements of a group $G$ with unit $e$ such that $abc=e$, then $ab=c^{-1}, \, bc= a^{-1}, ca=b^ {-1}$.
\item (A.2) encodes the involutivity property of the inversion map of a group, i.e. $(g^{-1})^{-1}=g, \forall g \in G$.
\item (A.3) encodes the compatibility between multiplication and inversion:
$$(ab)^{-1}=b^{-1}a^{-1}, \forall a,b \in G.$$
\item (A.4) encodes the associativity of the product: $a(bc)=(ab)c, \forall a,b,c \in G$.
\item (A.5) encodes the property of the unit of a group of being idempotent: $ee=e$.
\item The axiom (A.6)  states an important difference between the construction of relational symplectic groupoids and usual groupoids. The compatibility between the multiplication and the unit is defined up to an equivalence relation, denoted by $L_2$, whereas for groupoids such compatibility is strict; more precisely, for groupoids such equivalence relation is the identity. In addition, the multiplication and the unit are equivalent with respect to $L_2$.
\end{itemize}

\end{definition}

The graphical description of the axioms A.1-A.6 is given in Figure \ref{fig:Ax}. The morphisms $I_{rel}$ and $L_{rel}$ are represented by a twisted stripe and pair of pants respectively, and the induced immersed canonical relations $L_1, L_2$ and $L_3$ are constructed as compositions of $L$ and $I$. As it is shown in the Figure, they should satisfy  the previously defined compatibility axioms.
\newpage
\begin{figure}[h]
\centering%
\center{\includegraphics[scale=0.55]{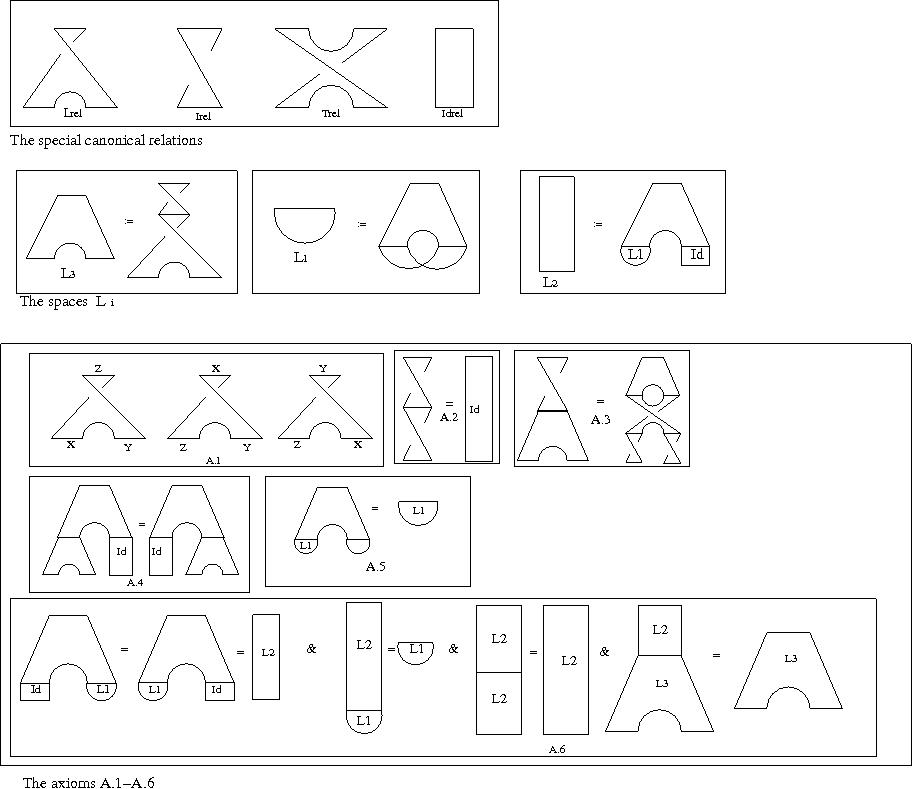}}
\caption{Relative symplectic groupoid: Diagrammatics}
\label{fig:Ax}
\end{figure}

\subsection{Examples}
In particular, symplectic groupoids are particular instances of relational symplectic groupoids, and triples $(\mathcal G, \mathcal L, I)$ where $I$ is an anti-symplectic involution, $L$ is an $I$-invariant Lagrangian in $\mathcal G$ and $L= \mathcal L\times \mathcal L \times  \mathcal L$. More examples of relational groupoids can be found in \cite {CC} and \cite{ConTh}.

This suggests that in the poly-symplectic context, there is a natural groupoid object which serves as integration of poly-Poisson structures, before Marsden-Weinstein reduction.

\begin{definition}\label{RelPS}
A \textbf{relational poly-symplectic groupoid} is a triple $(\mathcal G,\, L,\, I)$ where 
\begin{enumerate}
 \item $\mathcal G$ is a weak poly-symplectic manifold.  
\item $L$ is an immersed Lagrangian submanifold of $\mathcal G ^3.$
 \item $I$ is an antisymplectomorphism of $\mathcal G$
\end{enumerate}
satisfying the six compatibility axioms A.1-A.6:
\end{definition}

\subsection{Lagrangian submanifolds of Poly-symplectic structures}
There are several notions in the literature for Lagrangian submanifolds of poly-symplectic structures. For the purposes of this paper we consider the following definition of Lagrangian.
\begin{definition}
A submanifold $L$ of a poly-symplectic manifold $M$ to be Lagrangian if it is maximally isotropic, i.e. $L=L^{\omega}$. 
\end{definition}
Note that this does not imply that $2\dim (L)=M$ and it does not imply that $L\cong L^0$, as in the usual instance of Lagrangian subspaces of symplectic structures.

\begin{theorem}\label{PPSG}
Let $(S,P)$ be an integrable poly-symplectic structure. Then the evolution relations of the PPSM form a relational poly-symplectic groupoid which integrates $(S,P)$.
\end{theorem}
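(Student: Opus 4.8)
The strategy is to transfer the relational symplectic groupoid structure of the PSM (the $r=1$ case, established in \cite{CC, ConTh}) component-wise to the poly-symplectic setting, using the weak $r$-symplectic structure on $\F^{PP}_\partial \cong \oplus_r T^*(\pth(M))$ and the coisotropic submanifold $C_{(S,P)} = \mathrm{Mor}(T\Sigma,S)$ produced in Proposition~\ref{prop:PS and Coisotropic sbmnfd}. First I would take as the underlying weak poly-symplectic manifold $\mathcal G$ the space $C_{(S,P)}$ (equivalently, the space of $S$-paths), equipped with $\w = \iota^*\bar\w$, which Proposition~\ref{prop:PS and Coisotropic sbmnfd} already shows is weak $r$-symplectic. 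The inversion $I$ is the path-reversal $[X,\eta]\mapsto [i^*X, i^*\eta]$ with $i(t)=1-t$ exactly as in Theorem~\ref{Gpd}; one checks it is an antisymplectomorphism because each component $\bar\w_j$ changes sign under reparametrization reversal, just as in the scalar case. The Lagrangian $L\subseteq \mathcal G^3$ is the composition relation: triples of $S$-paths $([X_1,\eta_1],[X_2,\eta_2],[X_3,\eta_3])$ such that $X_3 \simeq X_1 * X_2$ after the appropriate gauge identification, i.e. the evolution relation of the PPSM on the disk with the four boundary sectors $I_0,\dots,I_3$ of the source described in Section~4.2.

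The key steps, in order, are: (1) verify that $L$ is isotropic for the product poly-symplectic form $\w\oplus\w\oplus(-\w)$ on $\mathcal G^3$ — this follows from the additivity of the integral defining each $\bar\w_j$ under path concatenation, which is the same computation used to show multiplicativity of $\Omega$ in the proof of Theorem~\ref{thm:PS-reduction of CSP}; (2) verify that $L$ is maximally isotropic, i.e. $L = L^\w$ in the sense of the Lagrangian definition just given (not the poly-Lagrangian dimensional condition, which need not hold here, cf.\ the remark after the definition) — for this I would use that $L$ is cut out by the moment-map-type constraints coming from the $\LG_0$-action and that the characteristic distribution has finite codimension, reducing the maximality check to the finite-dimensional reduced picture where it is Theorem~\ref{thm:PS-reduction of CSP}; (3) check axioms A.1--A.6: cyclicity (A.1), involutivity of $I$ (A.2), and compatibility of $I$ with multiplication (A.3) are formal consequences of the path-concatenation and path-reversal operations and hold verbatim as in \cite{ConTh}; associativity (A.4) holds because concatenation of $S$-paths is associative up to the gauge equivalence built into $\underline{C_{(S,P)}}$; idempotency of the unit (A.5) and the up-to-$L_2$ compatibility (A.6) are inherited from the scalar construction since the unit $\varepsilon(x) = [X\equiv x, \eta\equiv 0]$ and the equivalence relation $L_2$ do not involve the poly-Poisson data beyond $S$ itself. (4) Finally, "integrates $(S,P)$" means that the Marsden--Weinstein/coisotropic reduction of $\mathcal G$ along the characteristic foliation recovers the $r$-symplectic groupoid $\underline{C_{(S,P)}}\rightrightarrows M$ of Theorem~\ref{thm:PS-reduction of CSP}; this is precisely the content of that theorem together with the observation that reduction sends $(L,I)$ to the graph of multiplication and the inversion of the reduced groupoid.

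The main obstacle I expect is step (2), establishing that $L$ is maximally isotropic rather than merely isotropic. In the poly-symplectic world the naive dimension count fails (the paper is explicit that $2\dim L = \dim \mathcal G^3$ need not hold and $L\not\cong L^0$ in general), so one cannot conclude $L=L^\w$ from $L\subseteq L^\w$ plus dimensions as in the symplectic case. The way around this is to argue intrinsically: an element of $L^\w$ is a triple of tangent $S$-paths pairing to zero against all of $L$ under $\w\oplus\w\oplus(-\w)$; using the density of concatenable test paths and the injectivity of $\bar\w^\sharp$ (from the Proposition, which in turn uses $S^\circ=\{0\}$, condition (ii) of Definition~\ref{Def:PP}), one forces such a triple to lie in $L$. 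Concretely this is the $r$-fold version of the argument in \cite[Sec.~3]{CF}: the non-degeneracy condition $S^\circ = \{0\}$ is doing exactly the work that symplectic non-degeneracy does in the scalar proof, component by component. The remaining bookkeeping — matching boundary sectors, checking the gauge action restricts, and identifying the reduced relation with the groupoid multiplication — is routine and parallels \cite{CF, ConTh} line by line.
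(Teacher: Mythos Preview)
Your overall architecture is sound, but there is one genuine slip and one substantive methodological difference from the paper.

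The slip: you take $\mathcal G = C_{(S,P)}$ and cite Proposition~\ref{prop:PS and Coisotropic sbmnfd} for its weak $r$-symplecticity, but that proposition establishes weak $r$-symplecticity for $\F(S,P)=\pth(S)$, not for $C_{(S,P)}$; the latter is only coisotropic inside $\F(S,P)$, and $\iota^*\bar\w$ restricted to $C_{(S,P)}$ is degenerate along the characteristic distribution. The relational groupoid must live on a genuinely weak poly-symplectic manifold (Definition~\ref{RelPS}(1)), so $\mathcal G$ should be $\F(S,P)$ (or $\F^{PP}_\partial$), with $L$ an immersed Lagrangian therein. Your parenthetical ``equivalently, the space of $S$-paths'' suggests you may have conflated the two spaces.

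The methodological difference: you build $L$ directly as the PPSM evolution relation and then attack maximal isotropy head-on in path space, anticipating that step (2) is the hard part. The paper takes the opposite route: since $(S,P)$ is assumed integrable, Theorem~\ref{thm:PS-reduction of CSP} already furnishes the finite-dimensional $r$-symplectic groupoid $\underline{C_{(S,P)}}$; one observes that $\mathrm{gr}(\mu)$ is maximally isotropic there, and then \emph{pulls back} along the Marsden--Weinstein reduction map $\pi$ to define $L:=I\circ\pi^{-1}(\mathrm{gr}(\mu))$. Lagrangianity of $L$ then follows from the general fact that the preimage of a Lagrangian under coisotropic reduction is again Lagrangian, and the axioms A.1--A.6 come for free from the honest groupoid structure downstairs. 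This bypasses entirely the delicate infinite-dimensional maximality argument you flag as the main obstacle. Your intrinsic density argument using $S^\circ=\{0\}$ may well go through, but it is doing more work than necessary given the integrability hypothesis; the paper's pull-back route is the cleaner way to exploit that hypothesis.
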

\begin{proof}
It is easy to observe, as proved in  Example 3.9,  that the graph of the multiplication $gr(\mu)$ of a polysymplectic groupoid is maximally isotropic. Using the Marsden-Weinstein reduction, we observe that the preimage if $gr(\mu)$ under the reduction map is again maximally isotropic, since $\pi^{-1}(gr(\mu)) \subseteq C_{\partial}$. If we set $I$ to be the time reversal antisymplectomorphism for $\mathcal F_{\partial}$, then $L:= I\circ \pi^{-1}(gr(\mu))$ is the immersed Lagrangian which defines the relational groupoid structure for $\mathcal F_{\partial}$.
\end{proof}

\begin{remark}
Note that we do not require all the evolution relations to be Lagrangian. For instance, the evolution relation $L_1$, that is the preimage of the graph of the unit map for the Marsden-Weinstein reduction, is neither Lagrangian nor poly-Lagrangian (see Example 3.9).
\end{remark}

\section{Further Applications}
\subsection{Other integrations}
The path-space integration described above produces the source simply connected poly-symplectic groupoid integrating a poly-Poisson structure $(M,S,P)$. This integration is canonical and \emph{universal}: every Lie groupoid which integrates $(M,S,P)$ is obtained via a discrete quotient of the source simply connected integration. For instance, the holonomy gropoid integrating a Lie algebroid $A \to B$ is a quotient of the Lie groupoid $G\rightrightarrows B$ obtained by the path-space construction, and the quotient is given by holonomy equivalence. This implies that different integrations of the same Poisson manifold are equivalent in the extended symplectic category \cite{CC, ConTh}. In the context of relational integrations of poly-Poisson structures, the path-space integration defined above implies the following result for different integrations of $(M,S,P)$.
\begin{proposition}\label{Class}
Let $G\rightrightarrows M$ be any integration of a poly-Poisson structure $(M,S,P)$ and let $(G, L, I)$ be its corresponding relational poly--symplectic groupoid \footnote{obtained by taking the graphs of the corresponding structure maps of $G\rightrightarrows M$}. Let $(\mathcal G, \mathcal L, I)$ be the relational poly--symplectic groupoid  obtained in Theorem\ref{PPSG}. Then $(G,L,I)$ and  $(\mathcal G, \mathcal L, I)$ are equivalent as relational groupoids.
\end{proposition}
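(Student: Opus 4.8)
The plan is to mimic, in the poly-symplectic setting, the classification of integrations by relational symplectic groupoids via the path-space construction that appears in \cite{CC, ConTh}, using the fact that the PPSM integration of Theorem~\ref{PPSG} is built from cotangent paths in $\oplus_{(r)}T^*M$, which is the source-simply-connected object. First I would recall that $(\mathcal G, \mathcal L, I)$ is the reduced phase space $\underline{C_{(S,P)}}$ together with the graphs of its structure maps, and that by Theorem~\ref{thm:PS-reduction of CSP} this is precisely the source-simply-connected $r$-symplectic groupoid integrating $(S,P)$. On the other side, given an arbitrary integration $G\rightrightarrows M$, the Lie algebroid $S\to M$ is the same, so there is a canonical covering morphism of Lie groupoids $\Phi\colon \widetilde{G}\to G$ from the source-simply-connected integration $\widetilde{G}$, with discrete kernel; and by the uniqueness clause of Theorem~\ref{Thm:integration}, $\widetilde{G}$ is isomorphic, as an $r$-symplectic groupoid, to $\mathcal G=\underline{C_{(S,P)}}$. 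So the content of the proposition is that $(G,L,I)$ and $(\widetilde G, \widetilde L, \widetilde I)$ are equivalent as relational poly-symplectic groupoids, the equivalence being induced by the covering $\Phi$.

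The next step is to make precise the notion of equivalence of relational (poly-)symplectic groupoids and to check that a covering morphism induces one. Following \cite{CC, ConTh}, a morphism of relational groupoids is a canonical relation between the underlying weak (poly-)symplectic manifolds that intertwines the Lagrangian multiplication relations $L$ and the inversions $I$ up to the equivalence relation $L_2$ of axiom (A.6); an equivalence is a morphism admitting an inverse morphism in this weak sense. Here I would take the canonical relation to be $\mathrm{graph}(\Phi)\subseteq \widetilde G\times \overline G$ (or its inverse), which is isotropic because $\Phi$ is a morphism of $r$-symplectic groupoids preserving each $\omega_j$, and then verify the compatibility with the structure maps: since $\Phi$ intertwines source, target, unit, multiplication and inversion of the two groupoids on the nose, the graphs of these maps, hence the induced relations $L$, $L_1$, $L_2$, $L_3$, are matched under composition with $\mathrm{graph}(\Phi)$. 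The discreteness of $\ker\Phi$ is what guarantees that $\mathrm{graph}(\Phi)$ is Lagrangian (not merely isotropic) in the appropriate sense and that the composition in the reverse direction recovers the identity relation up to $L_2$. I would phrase this as: a covering morphism of $r$-symplectic groupoids induces an equivalence of the associated relational poly-symplectic groupoids, and then specialize to $\Phi\colon \mathcal G=\widetilde G\to G$.

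The main obstacle I anticipate is purely bookkeeping about the weak and possibly infinite-dimensional nature of the objects and about the axiom (A.6): one must check that composing $\mathrm{graph}(\Phi)$ with the Lagrangian $\mathcal L\subseteq\mathcal G^3$ is a clean/transversal composition so that the result is again an immersed Lagrangian, and that the ``unit up to $L_2$'' clause is respected — this is where the symplectic case already requires care, and the poly-symplectic weakening of ``Lagrangian'' (only $L=L^\omega$, not a dimension count) must be used consistently, exactly as in the Remark following Theorem~\ref{PPSG}. I would handle this by reducing, via the Marsden-Weinstein reduction map $\pi$, to statements about $C_{(S,P)}$ and cotangent paths, where cleanness of the relevant compositions follows from the finite-codimension property in Proposition~\ref{prop:PS and Coisotropic sbmnfd} together with the existence-and-uniqueness theory for the vector-valued ODEs defining cotangent paths; the holonomy-quotient example then illustrates the general discrete-quotient mechanism. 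The remaining steps — that any integration is a discrete quotient of $\widetilde G$, and that discrete quotients induce relational equivalences — are then formal consequences of Lie's second theorem for Lie algebroids applied to $S$ and of the definition of relational equivalence.
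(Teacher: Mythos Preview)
Your overall strategy is the paper's: any integration is a discrete quotient of the source-simply-connected one, and the graphs of the relevant maps are Lagrangian canonical relations furnishing the equivalence. However, you have misidentified the object $\mathcal G$ in the statement. The relational poly-symplectic groupoid $(\mathcal G,\mathcal L,I)$ of Theorem~\ref{PPSG} is \emph{not} the reduced phase space $\underline{C_{(S,P)}}$: it is the infinite-dimensional weak poly-symplectic manifold (the boundary-field space $\mathcal F_\partial$, or rather $\mathcal F(S,P)=\pth(S)$), with $\mathcal L$ defined as the preimage of $gr(m)$ under the Marsden--Weinstein reduction map and $I$ the time-reversal antisymplectomorphism. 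The finite-dimensional object $\underline{C_{(S,P)}}$ is $G_{ssc}$, not $\mathcal G$.

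Because of this, the paper's proof uses \emph{two} morphisms rather than your single $\mathrm{graph}(\Phi)$: first $\psi=gr(Q)$, the graph of the Marsden--Weinstein reduction $Q\colon \mathcal G\to G_{ssc}$, and then $\phi=gr(q)$, the graph of the covering $q\colon G_{ssc}\to G$; the claimed equivalence is $\phi\circ\psi$. Your argument, as written, only supplies the second leg and proves that $(G,L,I)$ is equivalent to $(G_{ssc},L_{ssc},I_{ssc})$; the first leg, that the unreduced $(\mathcal G,\mathcal L,I)$ is equivalent to its reduction, is exactly what you later allude to when you propose ``reducing via the Marsden--Weinstein reduction map $\pi$'', but it must appear as an explicit morphism in the composition rather than as a simplifying identification made at the outset. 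Once you insert that step, your argument coincides with the paper's. One small further discrepancy: the paper asserts that $gr(q)$ is \emph{poly-Lagrangian} (the stronger notion of Proposition~\ref{prop:pL-L}), arguing that poly-Lagrangianity is preserved under discrete quotients, whereas your isotropy-plus-discrete-kernel reasoning yields only ``Lagrangian'' in the maximally-isotropic sense of Definition~\ref{RelPS}.
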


\begin{proof}
The proof of this proposition is based on the fact that any $s$- fiber connected integration $G\rightrightarrows M$ of $S$ comes from a quotient with respect to the action of a discrete group on the s-fiber simply connected Lie groupoid $G_{ssc}\rightrightarrows M$ that integrates $A$ \cite{HM}. Therefore $G_{ssc}$ is a cover of $G$ and this implies that if $G_{ssc}$ is poly--symplectic, then the quotient map $q: G_{ssc} \to G$ is discrete and thus it induces a poly--symplectic structure on $G$ uniquely.  
 Furthermore, since $q$ is a covering map, the graph  of $q$ is a poly-Lagrangian submanifold of $G_{ssc} \times G$. This follows from the fact that poly-Lagrangianity is preserved after a discrete quotient. Thus $\phi:=gr(q)$ is a morphism in the category of poly--symplectic manifolds. Similarly, the  graph of the Marsden-Weinstein reduction map $Q: (\mathcal G, \mathcal L, I) \to G_{ssc}\rightrightarrows M$ is a poly-Lagrangian submanifold, and thus $\psi:= gr(Q)$ is a well defined morphism.
Following \cite{CC}, it is easy to observe that $\psi$ is an equivalence of relational poly-symplectic groupoids, and therefore, $(G,L,I)$ and  $(\mathcal G, \mathcal L, I)$  are equivalent via $\phi \circ \psi$.
\end{proof}

\subsection{Integration of Lie algebroids via poly-Poisson integration}

A well known fact in Poisson geometry and Lie theory of Lie algebroids is that the a Lie algebroid $A$ is integrable if and only if the Lie algebroid $T^*A^*$ (associated to the Poisson structure in $A^*$) is also integrable. Furthermore, if $\mathcal{G}$ is the source-simply connected Lie groupoid integrating $A$ then the cotangent groupoid $T^*\mathcal{G}\rightrightarrows A^*$ is the source-simply connected symplectic groupoid integrating $T^*A^*$. An analogous result can be proved in the poly--Poisson case following Remark~\ref{rmk:cotangent gpd}.

\begin{proposition}\label{prop:integration A vs +rA*}
A Lie algebroid $A$ is integrable if and only if $\oplus_r A^*$ is integrable as poly--Poisson manifold. 
\end{proposition}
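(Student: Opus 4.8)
The strategy is to reduce the poly-Poisson statement to the classical Poisson case by exploiting the structure of $\oplus_r A^*$ as a \emph{direct-sum} poly-Poisson manifold, together with the path-space construction summarized in Remark~\ref{rmk:cotangent gpd}. Recall that integrability of a poly-Poisson manifold $(M,S,P)$ means precisely integrability of its Lie algebroid $S\to M$ (Definition~\ref{Def:PP} endows $S$ with a Lie algebroid structure with anchor $P$), so the claim is equivalent to: the Lie algebroid $A$ is integrable if and only if the Lie algebroid $S\to \prod_r A^*$ defined by \eqref{eq:bundle-r-linear}--\eqref{eq:anchor-r-linear} (with $\LG$ replaced by $A$) is integrable. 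The plan is therefore to understand this $S$ concretely and relate its integrability to that of $T^*A^*$, whose integrability is classically equivalent to that of $A$.

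\textbf{Key steps.} First I would recall the classical fact (Weinstein, and in the form of \cite{C}): $A$ is integrable as a Lie algebroid if and only if the Poisson manifold $A^*$ (with its linear/Lie--Poisson structure) is integrable, and in that case $T^*\mathcal{G}\rightrightarrows A^*$ is the source-simply-connected symplectic integration, where $\mathcal{G}\rightrightarrows M$ integrates $A$. Second, I would make explicit the Lie algebroid $S\to \prod_r A^*$: over $\zeta=(\zeta_1,\dots,\zeta_r)$ its fibre is $\{(u_1,\dots,u_r)\mid u\in A_{x}\}$ where $x$ is the common base point of the $\zeta_i$, so that $S$ is (non-canonically, fibrewise) a copy of $A$ pulled back along $\prod_r A^*\to M$, with anchor $P_\zeta(u_1,\dots,u_r)=(\mathrm{ad}^*_u\zeta_1,\dots,\mathrm{ad}^*_u\zeta_r)$ and bracket \eqref{Def:bracketHP}. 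The crucial observation is that, because $S$ imposes the \emph{same} $u\in A$ in every slot, the Lie algebroid $S\to\prod_r A^*$ is isomorphic to the pullback (in the Lie-algebroid sense) of $T^*A^*\to A^*$ along the diagonal-type projection, or more precisely it is built from $r$ ``linked'' copies of $T^*A^*$ constrained to share the $A$-component; one direction of the argument is then to transport a symplectic (hence Lie groupoid) integration of $T^*A^*$ to a poly-symplectic integration of $S$ by a fibred-product construction, exactly as in Example~\ref{ex:integration-linear} and Remark~\ref{rmk:cotangent gpd}, yielding $\oplus_{(r)}T^*\mathcal{G}\rightrightarrows \prod_r A^*$.

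For the converse I would argue that an integration of $S\to\prod_r A^*$ restricts/projects to an integration of $A$: composing a local Lie groupoid for $S$ with a section $x\mapsto 0_x\in\prod_r A^*$ of the zero fibre recovers the isotropy data of $A$, and the monodromy/obstruction groups of $S$ at a point of the zero section map onto those of $A$ at the corresponding base point (the $u\mapsto(u,\dots,u)$ identification identifies the $A$-leafwise part of $S$ over the zero section with $A$ itself). Concretely, using the Crainic--Fernandes monodromy criterion, the monodromy group $\mathcal{N}_\zeta(S)$ of $S$ is controlled by the restricted holonomy of $S$ along its poly-symplectic leaf; over the zero section this leaf is (a product of copies of) a point and the relevant monodromy coincides with $\mathcal{N}_x(A)$, so discreteness of one forces discreteness of the other. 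Alternatively, and perhaps more cleanly, one invokes that integrability of a Lie algebroid is preserved under pullback along a surjective submersion with \emph{connected} fibres and is \emph{detected} on any transversal, and $\prod_r A^*\to M$ with the zero section is such a situation once the ``linked copies'' structure of $S$ is identified with a pullback of $T^*A^*$.

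\textbf{Main obstacle.} The delicate point is the precise identification of the Lie algebroid $S\to\prod_r A^*$ with a construction to which the classical $A\leftrightarrow T^*A^*$ equivalence and a pullback/transversal argument applies: the bracket \eqref{Def:bracketHP} on sections of $S$ must be checked to be compatible with the diagonal ``same $u$ in all slots'' constraint and to reduce, on the nose or up to Lie-algebroid isomorphism, to the Koszul bracket governing $T^*A^*$. Handling the $r>1$ book-keeping — that $S$ is not literally $r$ disjoint copies of $T^*A^*$ but $r$ copies sharing a common $A$-direction — and verifying that this sharing does not create or destroy integrability obstructions is where the real work lies; once that structural lemma is in place, both implications follow from Remark~\ref{rmk:cotangent gpd} and the classical cotangent-groupoid theorem.
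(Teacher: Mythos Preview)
Your forward direction coincides with the paper's: both invoke Remark~\ref{rmk:cotangent gpd} to produce $\oplus_{(r)}T^*\mathcal{G}\rightrightarrows\prod_r A^*$ as the poly-symplectic integration once $A$ is integrable.

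For the converse the paper takes a much shorter route that bypasses your ``main obstacle'' entirely. Instead of a global structural identification of $S$ or a monodromy computation, it simply exhibits $A$ as a Lie \emph{subalgebroid} of $S$: the composite $A\hookrightarrow T^*A^*\hookrightarrow S$, where the first arrow is the standard embedding from \cite{C} and the second covers the inclusion $\iota\colon A^*\hookrightarrow\oplus_r A^*$ on the first component, realises $A$ as a Lie subalgebroid. Since Lie subalgebroids of integrable Lie algebroids are integrable, the converse is immediate. Your monodromy and pullback/transversal strategies could presumably be pushed through, but they are substantially heavier than this one-line argument.

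There is also a concrete problem with your proposed identification of $S$ as a Lie-algebroid pullback of $T^*A^*$: for the direct-sum structure \eqref{eq:bundle-r-linear} the fibre $S_\zeta\cong A_x$ has dimension $\operatorname{rank}(A)$, whereas $T^*A^*$ has rank $\dim A^*=\operatorname{rank}(A)+\dim M$, so no pullback of $T^*A^*$ along a submersion $\prod_r A^*\to A^*$ can have the correct rank. What your ``same $u$ in every slot'' observation actually gives is that $S$ is the \emph{action} Lie algebroid $A\ltimes\prod_r A^*$; from that description the zero section $M\hookrightarrow\prod_r A^*$ immediately yields $A$ as a subalgebroid of $S$, which is essentially the paper's argument reached by a different door.
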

\begin{proof}
First we suppose that $A$ is integrable and consider $\mathcal{G}$ its source-simply connected integration. Following the comment on Remark~\ref{rmk:cotangent gpd} we get $\oplus_rT^*\mathcal{G}\rightrightarrows \oplus_rA^*$ is the source-simply connected poly--symplectic groupoid integrating $\oplus_rA^*$. On the other hand, the natural inclusion of $T^*A^*$ in $S$ defined in \eqref{eq:S in product} covering the inclusion $\iota\colon A^*\to \oplus_r A^*$ on the first component is a Lie algebroid morphism. As it is also known that $A\hookrightarrow T^*A^*$ is Lie algebroid morphism (see \cite{C}), we get that $A$ is Lie subalgebroid of $\oplus_rA^*$ which implies that $A$ is integrable whenever $\oplus_rA^*$ is.
\end{proof}

%
%

\subsection{Weinstein map and Morita equivalence}
In \cite{Lan}, the author proved a beautiful result relating Morita equivalences in the categories of Lie groupoids and Poisson manifolds by using a canonical construction called the {\it Weinstein map}. Such result can be understood as the classical counterpart of the Muhly--Renault--Williams Theorem for Lie Groupoids. To be more precise, Theorem~3.5 in \cite{Lan} states that if $\mathcal{G}$ and $\mathcal{H}$ are Morita equivalent (as Lie groupoids) then the dual vector bundle of their Lie algebroids, $Lie(\mathcal{G})^*$ and $Lie(\mathcal{H})^*$, are Morita equivalent (as Poisson manifolds). The proof is based on two canonical constructions for Lie groupoids and Lie algebroids, namely the moment map for Lie groupoid actions and the pull-back of a Lie groupoid action (sections 3.2 and 3.4 in \cite{Lan} respectivelly). If the Morita equivalence of $\mathcal{G}$ and $\mathcal{H}$ is realized over $M$, then the mutually orthogonal and complete Poisson maps (realizing the Morita equivalence of Poisson manifold ) 
\begin{equation}\label{eq:PoissonME}
Lie(\mathcal{H})^*\overset{J_R}{\longleftarrow}T^*M\overset{J_L}{\longrightarrow}Lie(\mathcal{G})^*
\end{equation}
are given by the moment map for respective Lie groupoid action (see (3.23) and (3.26) in \cite{Lan}).

The same result, with suitable changes, holds in the realm of poly--Poisson structures. The proof is easily extended just by noticing that the product of $r$ copies of the symplectic moment maps $J_L$ is a poly-symplectic moment map with values in the poly--Poisson manifolds $\oplus_{(r)}Lie(\mathcal{G})^*$ with anchor maps $P_\mathcal{G}$. The same is true for the moment map $J_R$ on the poly--Poisson manifold $\oplus_{(r)}Lie(\mathcal{H})^*$ with anchor map $P_\mathcal{H}$\footnote{The poly--Poisson structure of these two manifolds is explained in Remark~\ref{rmk:cotangent gpd}}. Indeed, we get that $$\bar{J_L}:=\oplus_{(r)}J_L:\oplus_{(r)}T^*M\to \oplus_{(r)}Lie(\mathcal{G})^* \mbox{\ and \ } \bar{J_R}:=\oplus_{(r)}J_R:\oplus_{(r)}T^*M\to \oplus_{(r)}Lie(\mathcal{H})^*$$ satisfy the following conditions:
\begin{itemize}
\item $\bar{J_L}$ and $\bar{J_R}$ are surjective submersion poly--Poisson maps
\item The level sets of the maps are connected and simply connected 
\item The foliations of $\oplus_{(r)}T^*M$ defined by the levels of $\bar{J_L}$ and $\bar{J_R}$ are mutually poly--symplectically orthogonal
\end{itemize}

Now recall the bracket of admissible functions as in Equation~\ref{eq:bracket on functions}. In the particular case of $\oplus_{(r)}T^*M$ we have the explicit formula for the bracket
$$\{(\mathrm{pr}_1^*h_1,\ldots,\mathrm{pr}_r^*h_r),(\mathrm{pr}_1^*g_1,\ldots\mathrm{pr}_r^*g_r)\}=(\mathrm{pr}_1^*\{h_1,g_1\},\ldots,\mathrm{pr}_r^*\{h_r,g_r\})$$
where $\mathrm{pr}_j:\oplus_{(r)}T^*M\to T^*M$ is the $j$-projection. Then we also obtain the following conditions

\begin{itemize}
\item $\{\bar{J_R}^*h,\bar{J_L}^*g\}=0$ for any $h\in C_{\mathrm{adm}}^\infty(\oplus_{(r)}Lie(\mathcal{H})^*)$ and $g\in C_{\mathrm{adm}}^\infty(\oplus_{(r)}Lie(\mathcal{G})^*)$
\item $P_\w(\bar{J_R}^*dh)$ and $P_\w(\bar{J_L}^*dg)$ are complete vector fields if $P_\mathcal{H}(dh)$ and $P_\mathcal{G}(dg)$ are also complete 
\end{itemize}

Motivated by these results we can extend the definition of Morita equivalence to poly--Poisson manifold. Two $r$-Poisson manifolds $(M,S,P)$ and $(N,\Sigma,\Lambda)$ are {\bf Morita equivalents} if there exists an $r$-symplectic manifold $Q$ and smooth maps 
$$M\overset{\mu}{\longleftarrow}Q\overset{\eta}{\longrightarrow}N$$
so that 
\begin{enumerate}[(1)]
\item $\mu$ and $\eta$ are surjective submersion poly--Poisson maps
\item The level sets of the maps are connected and simply connected 
\item The foliations of $Q$ defined by the levels of $\mu$ and $\eta$ are mutually poly--symplectically orthogonal
\item $\{\mu^*h,\eta^*g\}=0$ for $h$ and $g$ admissible functions on their respective spaces
\item $\mu$ and $\eta$ are poly--Poisson complete maps
\end{enumerate}

In conclusion, based on the previous definition, we have proved the following proposition:

\begin{proposition}\label{prop:poly-Wmap}
If $\mathcal{G}$ and $\mathcal{H}$ are Morita equivalents Lie groupoids, then $\oplus_{(r)}Lie(\mathcal{G})^*$ and $\oplus_{(r)}Lie(\mathcal{H})^*$ are Morita equivalents  poly--Poisson manifolds.
\end{proposition}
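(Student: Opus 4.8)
The plan is to reduce the statement to the classical result of Landsman (Theorem~3.5 in \cite{Lan}) by checking that the Weinstein construction is compatible with taking the $r$-fold direct sum. Concretely, given a Morita equivalence of Lie groupoids $\mathcal{G}$ and $\mathcal{H}$ realized over a manifold $M$ via moment maps, Landsman produces the mutually orthogonal complete Poisson maps \eqref{eq:PoissonME}; I would simply take $r$ copies of each of these maps and verify that they satisfy conditions (1)--(5) of the definition of Morita equivalence of poly--Poisson manifolds given just above the proposition.

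First I would recall the poly--Poisson structures involved: by Remark~\ref{rmk:cotangent gpd}, $\oplus_{(r)}\mathrm{Lie}(\mathcal{G})^*$ carries the direct-sum poly--Poisson structure with anchor $P_\mathcal{G}$, and $\oplus_{(r)}T^*M$ is the $r$-symplectic manifold of covelocities from \eqref{eq:can.PS} (equivalently, a product of $r$ copies of $T^*M$ as in Example~\ref{ex:PPprod}). Then I would argue point by point. For (1): $\bar{J_L}=\oplus_{(r)}J_L$ and $\bar{J_R}=\oplus_{(r)}J_R$ are surjective submersions because each factor is; that they are poly--Poisson maps follows from the fact that a direct-sum poly--Poisson map between products of the kind in Example~\ref{ex:PPprod} is precisely a tuple of Poisson maps, together with the observation (already used for moment maps of poly-symplectic actions, cf. the discussion after Proposition~\ref{prop:MW}) that the product of symplectic moment maps is a poly-symplectic moment map. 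For (2): the level set of $\bar{J_L}$ over $(\xi_1,\dots,\xi_r)$ is the product of the level sets $J_L^{-1}(\xi_i)$, and a finite product of connected (resp.\ simply connected) spaces is connected (resp.\ simply connected). For (3): poly--symplectic orthogonality of the two foliations in $\oplus_{(r)}T^*M$ reduces, by the product form $\w=(p_1^*\w_{can},\dots,p_r^*\w_{can})$, to the symplectic orthogonality of the $J_L$- and $J_R$-foliations on each factor $T^*M$, which is Landsman's condition. For (4): using the explicit bracket formula on admissible functions of $\oplus_{(r)}T^*M$ displayed just before the proposition, $\{\bar{J_R}^*h,\bar{J_L}^*g\}$ is the $r$-tuple of classical brackets $\{J_R^*h_i,J_L^*g_i\}$, each of which vanishes by Landsman. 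For (5): a vector field $P_\w(\bar{J}^*d h)$ on the product is the tuple of the component vector fields, so completeness on the product is equivalent to completeness on each factor, again reducing to Landsman's completeness statement.

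The main obstacle --- really the only nontrivial point --- is to make precise the reduction of the poly--Poisson conditions to their symplectic/Poisson counterparts on each factor, i.e.\ to justify that for the specific structures at hand ``poly--Poisson map'', ``poly--symplectic orthogonality'', ``admissible-function bracket'' and ``poly--Poisson completeness'' all decompose coordinatewise. This is where one must be careful that $\oplus_{(r)}T^*M$ and $\oplus_{(r)}\mathrm{Lie}(\mathcal{G})^*$ are being given their \emph{product} poly--Poisson structure (Example~\ref{ex:PPprod}) rather than some direct-sum variant, and that admissible functions on a product are generated by pullbacks of functions on the factors; once this is set up, each of (1)--(5) follows by applying the corresponding item of Landsman's theorem in parallel to the $r$ factors. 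I would therefore present the proof as: ``The poly--Poisson structures on $\oplus_{(r)}\mathrm{Lie}(\mathcal{G})^*$, $\oplus_{(r)}\mathrm{Lie}(\mathcal{H})^*$ and $\oplus_{(r)}T^*M$ are products of $r$ copies of the corresponding Poisson structures; hence each of the defining conditions (1)--(5) of poly--Poisson Morita equivalence for $\bar{J_L}$ and $\bar{J_R}$ holds if and only if the analogous condition holds for $J_L$ and $J_R$ on each factor, which is exactly the content of \cite[Theorem~3.5]{Lan}.''
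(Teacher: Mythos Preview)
Your approach is essentially the paper's: the argument preceding Proposition~\ref{prop:poly-Wmap} \emph{is} its proof, and it does exactly what you propose---take $\bar{J_L}=\oplus_{(r)}J_L$, $\bar{J_R}=\oplus_{(r)}J_R$ and verify (1)--(5) by reducing each item to Landsman's symplectic/Poisson statement on the factors. One caveat: you waver on which poly--Poisson structure $\oplus_{(r)}\mathrm{Lie}(\mathcal{G})^*$ carries, first citing Remark~\ref{rmk:cotangent gpd} (the direct-sum structure generalizing Example~\ref{ex:linear}) and then asserting it is the \emph{product} structure of Example~\ref{ex:PPprod}; the paper uses the former (see the footnote), so your justification of (1) via ``a tuple of Poisson maps between products'' needs to be rephrased as ``the $r$-fold sum of a Poisson moment map is a poly-symplectic moment map into the direct-sum linear poly--Poisson structure,'' which is the observation after Proposition~\ref{prop:MW} you already invoke.
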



\end{document}